\theoremstyle{plain}
\theoremstyle{plain}
\newenvironment{proof}[1][\protect\proofname]{\par
	\normalfont\topsep6\p@\@plus6\p@\relax
	\trivlist
	\itemindent\parindent
	\item[\hskip\labelsep\scshape #1]\ignorespaces
}{%
	\endtrivlist\@endpefalse
}
\providecommand{\proofname}{Proof}
\def\beq{\begin{equation}}
\def\eeq{\end{equation}}
\def\bi{\begin{itemize}}
\def\ei{\end{itemize}}
	\def\ba{\begin{array}}
	\def\ea{\end{array}}
	\def\bfig{\begin{figure}}
	\def\efig{\end{figure}}
	\def\C{\mathbb{C}}
	\def\R{\mathbb{R}}
	\def\Z{\mathbb{Z}}
	\newtheorem{theorem}{Theorem}[section]
	\newtheorem{lemma}[theorem]{Lemma}
	\newcommand{\bA}{{\bar{A}}}
	\newcommand{\Slc}{\mathrm{SL}(2,\mathbb{C})}
	\newcommand{\Su}{\mathrm{SU}(2)}
	\def\be{\begin{eqnarray}}
	\def\ee{\end{eqnarray}}
	\newcommand{\cb}{\mathcal B}
	\newcommand{\cc}{\mathcal C}
	\newcommand{\cf}{\mathcal F}
	\newcommand{\cg}{\mathcal G}
	\newcommand{\ch}{\mathcal H}
	\newcommand{\cj}{\mathcal J}
	\newcommand{\ck}{\mathcal K}
	\newcommand{\cm}{\mathcal M}
	\newcommand{\co}{\mathcal O}
	\newcommand{\calp}{\mathcal P}
	\newcommand{\cu}{\mathcal U}
	\newcommand{\cz}{\mathcal Z}
	\newcommand{\sa}{\mathscr{A}}
	\newcommand{\sm}{\mathscr{M}}
	\newcommand{\ff}{\mathfrak{f}}
	\newcommand{\fl}{\mathfrak{l}}  \newcommand{\Fl}{\mathfrak{L}}
	\newcommand{\fn}{\mathfrak{n}}
	  \newcommand{\Fr}{\mathfrak{R}}
	  \newcommand{\Fs}{\mathfrak{S}}
	\renewcommand{\a}{\alpha}
	\renewcommand{\b}{\beta}
	\newcommand{\g}{\gamma}
	\newcommand{\G}{\Gamma}
	\newcommand{\Sig}{\Sigma}
	\renewcommand{\l}{\lambda}
	\renewcommand{\o}{\omega}
	\renewcommand{\O}{\Omega}
	\renewcommand{\t}{\tau}
	\newcommand{\rmd}{\mathrm d}
	\newcommand{\lt}{\left}
	\newcommand{\rt}{\right}
	\newcommand{\lag}{\left\langle}
	\newcommand{\rag}{\right\rangle}
	\newcommand{\Ar}{\mathrm{Ar}}
	\newcommand{\re}{\mathrm{Re}}
	\newcommand{\im}{\mathrm{Im}}
	\newcommand{\tr}{\mathrm{Tr}}
\newcommand{\bR}{\overline{R}}
\begin{document}

\title{Lorentzian spinfoam gravity path integral and geometrical area-law entanglement entropy}

\author{Muxin Han}
\email{hanm(At)fau.edu}
\affiliation{Department of Physics, Florida Atlantic University, 777 Glades Road, Boca Raton, FL 33431, USA}
\affiliation{Institut f\"ur Quantengravitation, Universit\"at Erlangen-N\"urnberg, Staudtstr. 7/B2, 91058 Erlangen, Germany}

\begin{abstract}
This paper investigates entanglement entropy in 3+1 dimensional Lorentzian covariant Loop Quantum Gravity (LQG). We compute the entanglement entropy for a spatial region from states dynamically generated by a spinfoam path integral that sums over a family of 2-complexes. The resulting entropy exhibits a geometric area law, $S \simeq \beta a$, where the area $a$ of the entangling surface is determined by the LQG area spectrum and the leading coefficient $\beta>0$ is independent of the underlying 2-complexes. By relating the coupling constant of the sum over 2-complexes to the Barbero-Immirzi parameter $\gamma$, we reproduce the Bekenstein-Hawking formula for the range $0 < \gamma \lesssim 1/2$. This work provides a Lorentzian path integral approach to gravitational entropy without the need for contour prescriptions.

\end{abstract}

\maketitle

\tableofcontents

\section{Introduction}

The intersection of general relativity, quantum mechanics, and thermodynamics has revealed deep connections between the geometry of spacetime and the nature of information. A cornerstone of this relationship is the discovery that black holes possess thermodynamic properties, notably an entropy proportional to their horizon area---the celebrated Bekenstein-Hawking formula. This holographic principle, suggesting that the information content of a gravitational system is encoded on its boundary, was spectacularly generalized by the Ryu-Takayanagi formula, which equates the entanglement entropy of a boundary subregion with the area of a minimal surface in the bulk \cite{Ryu:2006bv}. These developments strongly imply that spacetime geometry itself may emerge from the entanglement structure of underlying quantum degrees of freedom, a viewpoint compellingly articulated by Jacobson's derivation of Einstein's equations from entanglement \cite{Jacobson:2015hqa}.

A challenge in quantum gravity is to derive these profound geometric and entanglement properties from a fundamental, microscopic theory of spacetime. Loop Quantum Gravity (LQG), as a background-independent and non-perturbative quantum gravity theory \cite{ thiemann2008modern,rovelli2014covariant}, offers a promising framework for this task. In LQG, the fundamental quantum excitations of geometry are described by spin-network states: graphs whose links are decorated by representations of SU(2) and whose nodes are decorated by intertwiners. These states form an orthonormal basis for the kinematical Hilbert space and are eigenstates of geometric operators, such as area and volume, which consequently feature discrete spectra \cite{rovelli1995discreteness,ashtekar1997quantum, ashtekar1997quantumII}. The covariant quantum dynamics are encoded in spinfoam amplitudes, which describe the evolution of these spin-network states and form a path integral for quantum gravity \cite{rovelli2014covariant,Perez2012,hanPI,Conrady:2008ea}.

One of the early successes of LQG was the microscopic derivation of the black hole entropy area law by counting the number of spin-network states that endow a horizon with a given area \cite{Rovelli:1996dv, Ashtekar:1997yu, Domagala:2004jt,Agullo:2010zz,ENP,BarberoG:2015xcq}. The work in \cite{lqgee1} derives the area-law entanglement entropy in LQG by relating the entanglement entropy to a similar counting of states. In these results, the area law is \emph{geometrical} in the sense that the entropy is proportional to the surface area given by the area spectrum of LQG quantum geometry, in contrast to the ``topological'' area law: the entropy is proportional to the number of links in the network intersecting the surface, in e.g. \cite{Bianchi:2018fmq,Pastawski:2015qua,Qi1}.

While a significant achievement, these results was largely kinematical, focusing on the properties of states on a fixed boundary without fully incorporating the bulk gravitational dynamics. A more complete and physical picture requires understanding how gravitational entropy, particularly entanglement entropy, arises from states that are solutions to the theory's dynamics. This leads to a crucial question: can we compute the entanglement entropy of a spatial region from a state generated by Lorentzian spinfoam path integral, and does it reproduce the expected area law?

Path integral formulations of gravitational entropy involves the replica trick \cite{Calabrese:2004eu, LM2013,Dong:2016fnf}, which requires calculating the partition function on a replicated manifold with branch-cuts. In the context of gravity, this calculation is typically performed in the Euclidean signature via analytic continuation. However, the Euclidean gravity path integral has various issues, such as it can be ill-defined due to the unboundedness of the Einstein-Hilbert action from below, and the formal Wick rotation from a Lorentzian to a Euclidean signature is not well-understood in a background-independent theory, as well as puzzles in physical interpretation (see \cite{Banihashemi:2024weu} and references therein). 

Lorentzian path integral formulations of gravitational entropy has recently gained prominence. It offers a conceptually sound alternative to the Euclidean path integral by retaining the Lorentzian signature of physical spacetime. Conventionally, these approaches rely on an analytic continuation of the gravitational action, with the entropy emerging from its imaginary part (see e.g. \cite{dche,Colin-Ellerin:2020mva,Dittrich:2024awu,Banihashemi:2024weu}). However, a significant challenge arises within the replica trick: the saddle-point geometry of the replicated manifold is singular at the branching surface. The singular geometry corresponds to a singularity on the integration cycle, which introduces a fundamental sign ambiguity $\pm$ to the entropy. The choice of sign depends on how the path integral contour is prescribed to circumvent this singularity. Although the entropy must be positive, obtaining the correct sign rests on an ad-hoc choice of contour that lacks a clear physical justification.

This paper study the Lorentzian gravitational entropy by using the spinfoam formulation. We compute the entanglement entropy of a spatial region $R$ from a state $\psi$ dynamically generated by Lorentzian spinfoam path integral. We denote by $\Ar(\Fs)=4\pi \g\ell_P^2 a$ the area of $\Fs=\partial R$ given by the area spectrum. For large $a$, the entanglement entropy behaves as the area law at the leading order and accompanied by the logarithmic correction:
\be
S={\b} a+c\log (a)+O(1),\label{arealawentropyS=ba000}
\ee
The coefficient $\b$ of the area law is positive definite, so the entropy does not have the $\pm$ sign ambiguity. Moreover, $\b$ does not depend on the choice of 2-complexes in defining spinfoams, although the coefficient $c$ of logarithmic correction may depend on the boundary graph of the complex. 

The state $\psi$, which is central to our derivation of the entanglement entropy, is a superposition of spin-network states over a family of graphs sharing a common set of nodes. This family is generated from a root graph, $\G$, by varying the multiplicity of links between nodes, where $\G$ itself has an link multiplicity of one. Correspondingly, the spinfoam path integral for $\psi$ is a sum over amplitudes on a family of 2-complexes. These complexes share the same underlying vertices and edges and are generated from a root complex, $\ck$, by varying the number of faces for any given boundary loop, where $\ck$ contains a single face for each such boundary.  We call the state $\psi$ a spin-network stack and the spinfoam path integral a spinfoam stack, since they are obtained by stacking links and faces upon the root graph and 2-complex. Since spin-network stacks are well-defined states in the LQG Hilbert space and spinfoam stacks represent their covariant histories, these constructions are, by definition, physically relevant and well-motivated subjects for analysis. In addition, these constructions can be understood as a coarse-graining process where macroscopic geometry emerges from coarse-grained observables, analogous to how a macrostate is defined in statistical mechanics \cite{Bodendorfer:2018csn,Han:2016xmb,Han:2019emp}. The entropy formula \eqref{arealawentropyS=ba000} provides a connection between classical and quantum theory that does not involve the large spin approximation, upon which the semiclassical analysis of LQG frequently relies.

In our construction, the sum of amplitudes over different 2-complexes is weighted by some ``coupling constants''. When these constants are suitably related to the Barbero-Immirzi parameter $\g$, the entanglement entropy \eqref{arealawentropyS=ba000} reproduces the Bekenstein-Hawking formula
\be
S\simeq \frac{\Ar(\Fs)}{4\ell_P^2},
\ee
This result holds for any $0<\g\lesssim 1/2$.

Our computation of entanglement entropy applies the replica trick to spinfoams. The entropy is computed from the replica partition function which is a sum of spinfoam amplitudes on a family of replicated 2-complexes. There is no singularity in the integration cycles of the spinfoam amplitudes. The sum of these spinfoam amplitudes can be formulated a state-sum that has the same scheme as the state-counting in the computation LQG black hole entropy. The entropy formula \eqref{arealawentropyS=ba000} is derived from this state-sum.

The state $\psi$ evolves from an initial state $\psi_0$ by spinfoam dynamics, but the entanglement entropy \eqref{arealawentropyS=ba000} is remarkably insensitive to the choice of the initial state. This suggests that the area-law gravitational entanglement are a generic outcome of the dynamics. This also seems to suggest an automatic time-averaging mechanism toward the equilibrium inherent in gravity path integral, consistent with the theory's diffeomorphism invariance. 

The paper is organized as follows. Section \ref{Stacking spin-networks} introduces the concept of a spin-network stack as a coarse-graining of spatial geometry. Section \ref{Stacking spinfoams} extends this idea to the spacetime picture, defining the spinfoam stack and its associated amplitude. In Section \ref{Laplace transform}, we discuss the Laplace transform and stationary phase approximation needed to analyze the state sum. Section \ref{Modular entanglement entropy} introduces the formalism of modular entanglement entropy in the context of LQG, building on the algebraic approach suitable for gauge theories. In section \ref{Spinfoam entanglement entropy}, we apply the replica trick to the spinfoam state $\psi$ and derive the area law for entanglement entropy. Section \ref{Spinfoam state with fix areas} discusses an alternative state that fixes macroscopic areas, and the area-law entanglement entropy. In Section \ref{An analog of black hole entropy}, we adapt our formalism to a scenario analogous to a black hole, where the total area of a single entangling surface is fixed, and derive the area-law entanglement entropy, including logarithmic corrections that depend on the discretization of the entangling surface.

\section{Stacking spin-networks}\label{Stacking spin-networks}

Given a closed graph $\G$ embedded in the spatial slice $\Sig$, a spin-network state based on $\G$ is defined by assigning a spin $j=k/2$ ($k\in\Z_+$) to every oriented link $\fl$ and a normalized interwiner $I_\fn$ to every node $\fn$. We assume that in $\G$, any two nodes are connected at most by a single link if they are connected. Focusing on a link $\fl$ connecting a source node $\fn_1=s(\fl)$ to a target node $\fn_2=t(\fl)$, the state can be written as:
\be
\cdots\lt(I_{\fn_2}\rt)^{k;\cdots}_{m;\cdots} \Pi^{k}_{m,n}\lt(H_\fl\rt)\lt(I_{\fn_1}\rt)^{k;\cdots}_{n;\cdots}\cdots,\label{spinnetworkstate}
\ee
where $\Pi^{k}_{m,n}(H_\fl)=\sqrt{d_k} D^{k}_{m,n}(H_\fl)$ with $d_k=k+1$ is the normalized Wigner $D$-function of the SU(2) holonomy $H_\fl$, and $\cdots$ stands for the quantities relating to links other than $\fl$. The magnetic indices $m,n$  are contracted between the intertwiners and the Wigner $D$-function.

We can construct a more general set of spin-network states from $\G$ by simply increasing the number of links connecting $\fn_1,\fn_2$, a generic one among these states is
\be
\cdots\lt(I_{\fn_2}\rt)^{k_1\cdots k_p;\cdots}_{m_1\cdots m_p;\cdots}\prod_{i=1}^p \Pi^{k_i}_{m_i,n_i}\lt(H_{\fl(i)}\rt)\lt(I_{\fn_1}\rt)^{k_1\cdots k_p;\cdots}_{n_1\cdots n_p;\cdots}\cdots.
\ee
There are in total $p$ links $\fl(i)$, $i=1,\cdots,p$, connecting $\fn_1,\fn_2$. Each link carries the SU(2) holonomy $H_{\fl(i)}$ and the spin $k_i/2$. Consequently, the intertwiners $I_{\fn_{1}}$ and $I_{\fn_{2}}$ at the nodes become higher-valent to accommodate the additional links. 

We form the superposition of these spin-networks over the number of links $p$, the spins $\vec{k}=(k_1,\cdots,k_p)$, and the intertwiners (see FIG.\ref{sn_stack} for illustration). To ensure the state is normalizable, the sum is truncated by imposing a constraint: the total LQG area associated with the $p$ links must not exceed a cutoff value, $4\pi\g\ell_P^2A_\fl$ for some arbitrary $A_\fl>0$. This constraint is imposed to every pair of neighboring nodes and to each state in the superposition. The resulting state is a finite linear combination of spin-network states with coefficients $C_{\vec{\mu},\{I_\fn\}_{\fn}}$:
\be
\Psi_{\G,\vec{A}}\lt(\vec{H}\rt)&=&\sum_{\vec \mu}\sum_{\{I_\fn\}_\fn} C_{\vec \mu,\{I_\fn\}_\fn}\prod_{\fl\subset\G} \Theta\left(A_{\fl}-\alpha_{p_\fl,\vec{k}(\fl)}\right) \tr\lt( \bigotimes_{\fn\in\G} I_\fn \cdot \bigotimes_{\fl\subset\G}\lt[ \bigotimes_{i=1}^{p_\fl} \Pi^{k_i(\fl)}\lt(H_{\fl(i)}\rt)\rt]\rt),\label{spinnetworkstack}\\
\vec\mu&=&\lt(\lt\{p_\fl,\vec{k}(\fl)\rt\}\rt)_{\fl},\qquad 
\alpha_{p,\vec{k}}=\sum_{i=1}^{p}\sqrt{k_{i}(k_{i}+2)}.
\ee
In this expression, $p_\fl=\Z_+$ is the number of links replacing the original link $\fl$, and $k_i(\fl)\in\Z_+$ are their spins. The function $\Theta(x)$ is the Heaviside step function: $\Theta(x)=1$ for $x\geq 0$ and $\Theta(x)=0$ for $x< 0$ The trace $\tr$ denotes the complete contraction of all magnetic indices. In a representation-independent form, the state is written as
 \be
|\Psi_{\G,\vec{A}}\rangle =\sum_{\vec{\mu}}\sum_{\{I_\fn\}_\fn}C_{\vec{\mu},\{I_\fn\}_{\fn}}\prod_{\fl\subset\G} \Theta\left(A_{\fl}-\alpha_{p_\fl,\vec{k}(\fl)}\right)\bigotimes_{\fn\in\G} |I_{\fn}\rangle. \label{stackcoarsegr}
\ee
We call $\Psi_{\G,\vec{A}}$ a \emph{spin-network stack}, and the underlying graph $\G$ is referred to as the \emph{root graph}. A spin-network state is a special case of spin-network stack, by setting all $C_{\vec{\mu},\{I_\fn\}_{\fn}}$ to vanish except one. In the LQG Hilbert space that includes all graphs, densely many state can be represented as a linear combination of spin-network stacks (with some $C_{\vec{\mu},\{I_\fn\}_{\fn}}$ and $\vec A$) based on different root graphs.

\begin{figure}[t]
\centering
\includegraphics[width=0.5\textwidth]{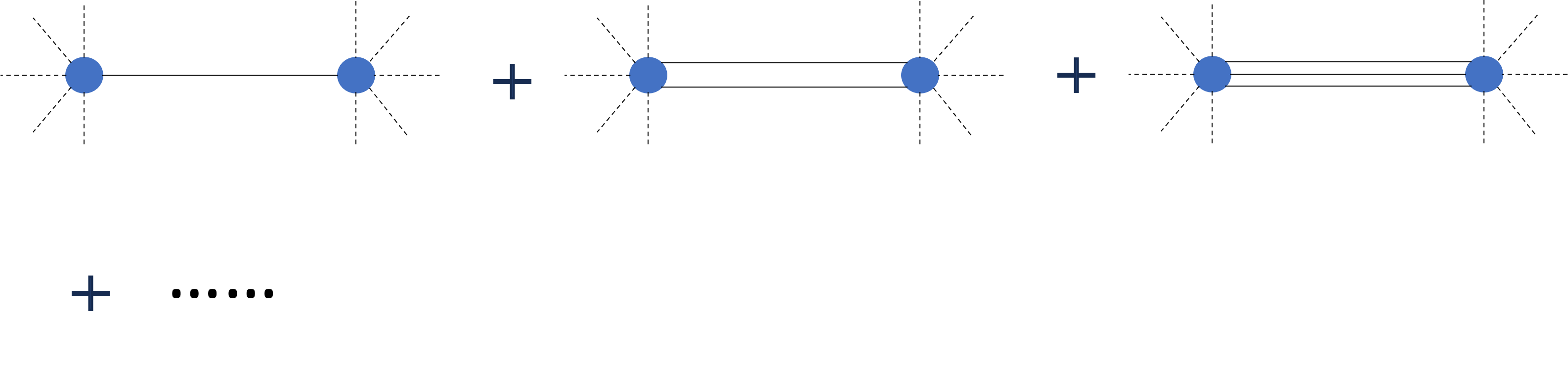}
\caption{The spin-network stack.}
\label{sn_stack}
\end{figure}

The concept of a spin-network stack can be understood in the context of coarse-graining a spatial region. Consider a compact spatial slice $\Sig$ partitioned into a set of closed, 3d subregions $\{R_1, \dots, R_n\}$ such that $\cup_{i=1}^n R_i = \Sig$. Any two regions $R_i$ and $R_j$ intersect only at their boundaries, if at all.

This partition has a dual graph $\G$ embedded in $\Sig$: each region $R_i$ contains a corresponding node $\fn_i$, and for any two adjacent regions, the interface $\Fs_{ij} = R_i \cap R_j$ is dual to a single link $\fl_{ij}$ connecting nodes $\fn_i$ and $\fn_j$. We identify this dual graph $\G$ as the root graph for our construction. We then generalize this duality by allowing the interface $\Fs_{ij}$ between two regions to be dual not just to a single link, but to a collection of arbitrary $p$ links connecting the nodes $\fn_i$ and $\fn_j$.

The collection of spins $\vec{k}$ on these $p$ links endows the interface $\Fs_{ij}$ with a total quantum area given by $\alpha_{p,\vec{k}}$. We now introduce a set of coarse-grained observables by assigning a macroscopic area $A_{\fl_{ij}} \gg 1$ to each interface. A state is constructed by assigning spins and intertwiners to the graph, subject to the constraint that the total quantum area of each interface does not exceed its assigned macroscopic area, i.e., $\alpha_{p,\vec{k}} \leq A_{\fl_{ij}}$. This area cut-off ensures that the number of links with non-trivial spins penetrating any interface is finite.

Let $\G_{\rm max}$ be the maximal graph, defined as the union of all possible graphs that satisfy these area constraints for a given root graph $\G$. This graph has $n$ nodes (one for each region $R_i$) and the maximum number of links allowed by the area cut-offs between each pair of nodes. The Hilbert space $\ch_{\vec A}$ is then constructed upon this maximal graph. It contains all spin-network stacks corresponding to the macrostate defined by the set of areas $\vec A=\{A_{\fl_{ij}}\}$.

This Hilbert space has the following decomposition:
\be
\ch_{\vec A}=\bigoplus_{\{\vec{k}\}}\lt[\bigotimes_{i=1}^n \ch_{R_i,\{\vec k\}}\rt],\label{decompA1An}
\ee
Here, $\{\vec{k}\}$ denotes a specific assignment of spins to all links in the maximal graph $\G_{\rm max}$. The space $\ch_{R_i,\{\vec k\}}$ is the space of allowed intertwiners at the node $\fn_i$ (corresponding to region $R_i$) for that given spin configuration.

In this framework, the assigned areas $\{A_{\fl_{ij}}\}$ are the macroscopic, coarse-grained observables that define a macrostate. The Hilbert space $\ch_{\vec A}$ is then the space containing all the microscopic quantum states consistent with that macrostate.

Beyond surface areas, other quantities can serve as macroscopic observables. For instance, one could use the volumes of the subregions $R_i$, given by the eigenvalues of the corresponding volume operators. Since the volume and area operators commute, it is possible to impose simultaneous constraints on both. This would involve restricting each intertwiner space $\ch_{R_i,\{\vec{k}\}}$ to states with a specific macroscopic volume or an upper volume bound. However, for the purposes of the present discussion, we will not consider volume.

It is also possible to consider more complex coarse-graining schemes. For example, each subregion $R_i$ could contain multiple graph nodes rather than a single one. Nevertheless, the current model, with one node per subregion, represents a case that provides essential insights.

The structure of the Hilbert space reflects the principle of macroscopic locality. The factorization (under direct sum) of $\ch_\Sig$ into a tensor product of spaces associated with individual subregions, as shown in \eqref{decompA1An}, reflects this locality with respect to the partition $\{R_i\}_{i=1}^n$.

In statistical mechanics, coarse-graining often involves fixing a macroscopic observable like energy, where the corresponding microstates of a fixed energy lie within a narrow energy shell, approximating energy eigenstates. By analogy, we can define an alternative spin-network stack that approximates eigenstates of the area operators. This is achieved by replacing the Heaviside step function $\Theta(A_{\fl}-\alpha_{p_\fl,\vec{k}(\fl)})$ in \eqref{stackcoarsegr} with a characteristic function $\chi_\delta(A_{\fl}-\alpha_{p_\fl,\vec{k}(\fl)})$. The new state, denoted $|\Psi'_{\G,\vec{A}}\rangle$, is defined as:
\be
|\Psi'_{\G,\vec{A}}\rangle =\sum_{\vec{\mu}}C_{\vec{\mu},\{I_\fn\}_\fn}\prod_{\fl\subset\G} \chi_\delta\left(A_{\fl}-\alpha_{p_\fl,\vec{k}(\fl)}\right)\bigotimes_{\fn\in\G} |I_{\fn}\rangle. \label{stackcoarsegr2}
\ee
Here, $\chi_\delta(x)$ is the characteristic function for the interval $[-\delta, \delta]$, meaning $\chi_\delta(x)=1$ if $|x| \le \delta$ and is zero otherwise, for some small constant $\delta \sim O(1)$. This modification means the state $|\Psi'_{\G,\vec{A}}\rangle$ is a superposition of microstates whose total interface area $\alpha_{p_\fl,\vec{k}(\fl)}$ falls within a small window $\delta$ of the macroscopic value $A_\fl \gg 1$. Thus, $|\Psi'_{\G,\vec{A}}\rangle$ approximates an eigenstate of the area operators with eigenvalues $\{A_\fl\}_\fl$. This contrasts with the original state $|\Psi_{\G,\vec{A}}\rangle$, where the values $\{A_\fl\}_\fl$ merely served as upper bounds for the areas.

The Hilbert space spanned by these new states, $|\Psi'_{\G,\vec{A}}\rangle$, retains the same decomposition shown in \eqref{decompA1An}; the only difference is that the constraint on the spin configurations $\vec{k}$ is now given by $\chi_\delta(\cdots)$ instead of $\Theta(\cdots)$. In the following analysis, we will consider both types of spin-network stack states, $|\Psi_{\G,\vec{A}}\rangle$ and $|\Psi'_{\G,\vec{A}}\rangle$.

\begin{figure}[t]
\centering
\includegraphics[width=0.25\textwidth]{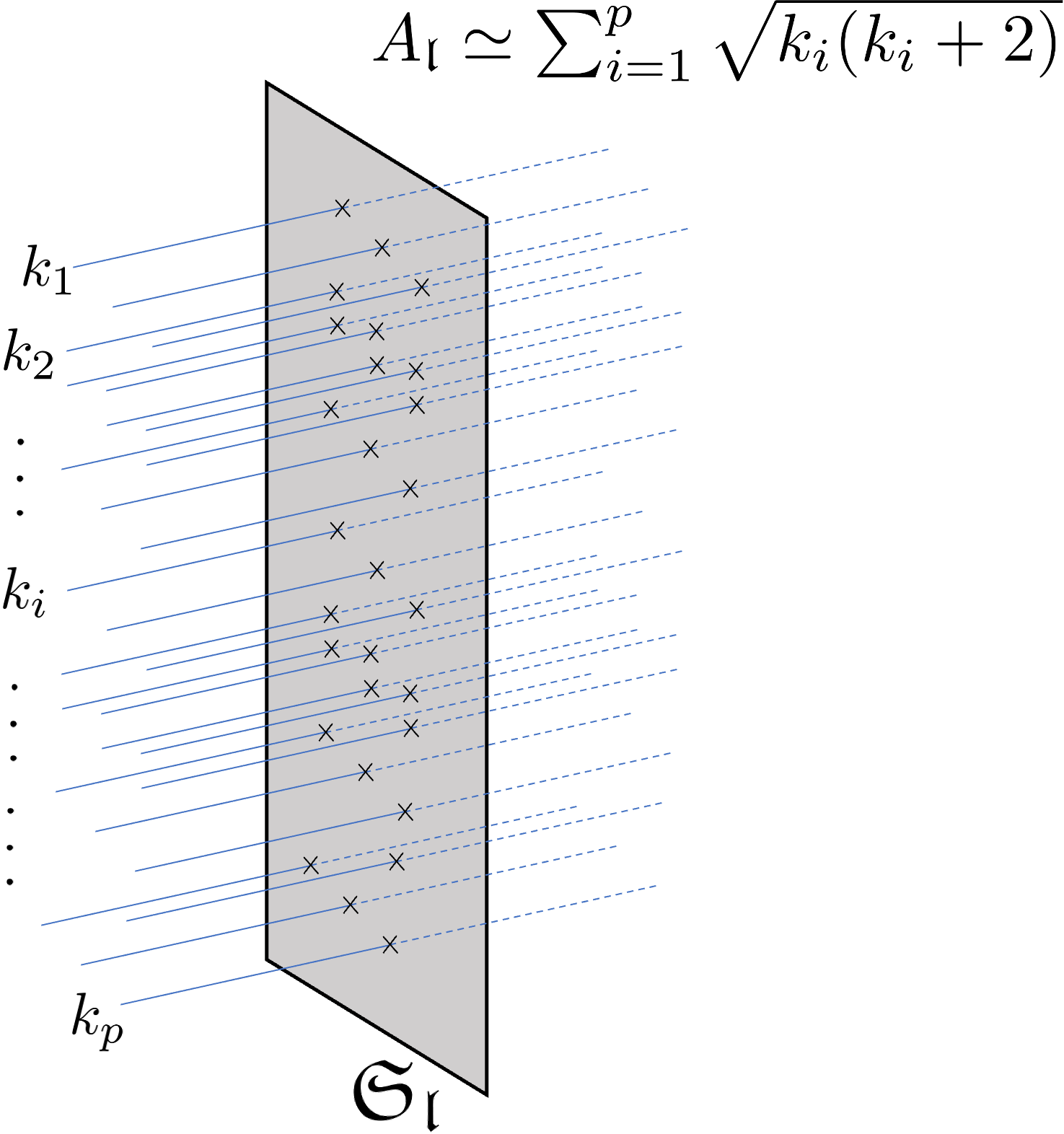}
\caption{Coarse-graining $p$ stacked links with spins $k_1/2,\cdots,k_p/2$ to the macroscopic area $A_\fl$ associated to the surface $\Fs_\fl$.}
\label{grain}
\end{figure}

The entanglement entropy of a subregion $R_i$ provides a measure of its quantum entanglement with the rest of the system. It is calculated by tracing out the microscopic degrees of freedom in the complementary region. This partial trace operation effectively averages over external degrees of freedom, resulting in a loss of information about the precise global state. This act of ignoring fine-grained details to arrive at a coarse-grained quantity--the entanglement entropy itself--is a form of coarse-graining.

This procedure is analogous to how the macrostate is defined: by restricting observations to subsystems. It therefore establishes a possible connection to the coarse-graining framework discussed above, where the macrostate was specified by observables associated with (the boundaries of) the subregions.

The expression of this connection is expected to be the area law for entanglement entropy. The area law posits that the entanglement entropy of a region $R_i$ is proportional to the macroscopic area of its boundary $\partial R_i$. The boundary area is defined by the set of observables $\{A_\fl\}_\fl$ that specifies the macrostate.

\section{Stacking spinfoams}\label{Stacking spinfoams}

Our primary objective is to identify states within the LQG Hilbert space that satisfy the area law for entanglement entropy. We propose that a wide class of spinfoam states, which are generated by the dynamics of LQG, fulfill this condition.

A spinfoam state $\psi$ is constructed by applying a spinfoam evolution operator $\widehat\sa_\ck$ to an initial state $\psi_0$ 
\be
|\psi\rangle = \widehat\sa_\ck \mid \psi_0\rangle,\label{psi8}
\ee
The matrix elements of $\widehat\sa_\ck$ are the spinfoam amplitudes, which define the path integral for quantum gravity in the LQG framework. Thus, $\psi$ is a state that encodes the dynamics of quantum gravity. We will show in Section \ref{Spinfoam entanglement entropy} that the entanglement entropy of such a state indeed satisfies the desired area law.

In the spinfoam formulation of LQG, a spinfoam represents the history of spin-network. In this (3+1)-dimensional picture, the links $\fl$ and nodes $\fn$ of a 3d spin-network evolve and become the faces $f$ and edges $e$ of a 2-complex, respectively. The faces and edges are assigned respectively spins $j_f = k_f/2$ and intertwiners $I_e$, which are the same as the ones on the initial spin-network links and nodes. Conversely, given a spinfoam, a spin-network state on a spatial slice can be viewed as the boundary of the spinfoam, and any spatial cross-section of a spinfoam is a spin-network.

We now extend the concept of a stack to the spacetime picture. A spinfoam stack is defined as a superposition of spinfoams (summed over different 2-complexes) with the defining property that its intersection with any spatial slice yields a spin-network stack. The amplitude for a spinfoam stack, which we term the \emph{stack amplitude}, is the sum of the individual amplitudes of all spinfoams in the superposition. Since the constituent spinfoams are generally built on non-simplicial 2-complexes, their amplitudes are constructed using the KKL formalism \cite{KKL,generalize}. Given that spin-network stacks are well-defined states in the LQG Hilbert space, spinfoam stacks representing the covariant history of spin-network stacks must be taken into account in spinfoam theory.

The construction is analogous to the spatial case: just as a spin-network stack is built by ``stacking'' links upon a root graph $\G$, a spinfoam stack is built by ``stacking'' faces upon a foundational root 2-complex, denoted $\ck$ (see FIG. \ref{sf_stack}). The spacetime manifold partitioned by this root 2-complex is denoted by $\sm_4$. The stack amplitude $\sa_{{\cal K}}$ depends on the root complex $\ck$, the area cut-offs at each faces $\vec{A}=\{A_f\}_f$, the boundary states $\vec \ff$, and some ``coupling constant'' $\vec\l=\{\l_f\}_f$ weighting the sum over complexes. The amplitude has the following integral formula:
\be
\sa_{{\cal K}}\left(\vec{A},\vec{\ff},\vec{\l}\right)&=&\int\prod_{(v,e)}\rmd g_{ve}\prod_{f=h,b}\o_f\left(A_f;\{g_{ve}\},\vec{\ff},\l_f\right)\label{saKampli}\\
\o_f&=&\sum_{p_f=1}^{\infty}\prod_{i=1}^{p_f}\sum_{k_{i}(f)=1}^{\infty}\zeta^{(f)}_{k_{i}(f)}\left(\{g_{ve}\},\ff^{(i)},\l_f\right)\Theta\left(A_{f}-\alpha_{p_f,\vec{k}(f)}\right).\nonumber
\ee
We always use $v,e,f$ to denote the vertices, edges, and faces of the root complex $\ck$. We denote the internal face by $h$ and the boundary faces by $b$. The $p_f$ stacked faces at each $f$, labelled by $i=1,\cdots,p_f$, share the same set of edges and thus the same set of group variables $g_{ve}\in\Slc$, just as the stacked links share the same pair of nodes in spin-network stack. The stacked faces carry spins $k_i(f)/2$, $i=1,\cdots,p_f$.

In the expression \eqref{saKampli}, the sums $\sum_{p_f=1}^\infty$ for all $f\subset \ck$ gives the sum over 2-complexes in the spinfoam stack. All these 2-complexes are obtained from the root complex $\ck$ by stacking faces. The integrals $\int\rmd g_{ve}$ are the Haar integrals over $\Slc$. The spins $k_i(f)/2$ are generally summed even for the boundary faces, since the boundary state is a spin-network stack, which has the sum over spins. The cut-off $A_f$ has been introduced to every face $f\subset\ck$ so that the sums over $p_f$ and $k_i$ are truncated to finite. Only for the boundary face $f=b$, the functions $\o_b$ and $\zeta^{(b)}_{k_{i}}$ has the dependence on the boundary state $\ff^{(i)}=\ff_2^{(i)}\otimes \ff_1^{(i)}\in \ch_{k_i}\otimes\ch_{k_i}^*$, where $i=1,\cdots,p_b$ labels the stacked faces at $b$. For internal face $f=h$, $\o_h$ and $\zeta^{(h)}_{k_{i}}$ do not have dependence on $\vec\ff$. The weights in the sum over complexes relates to the ``coupling constant'' $\l_f>0$, by $\zeta^{(f)}_k\propto \l_f$. The ampitude $\sa_{\ck}$ is a polynomial in $\vec\l=\{\l_f\}_f$ \footnote{The power series in $\vec\l$ is truncated due to the area cut-off.}, where the power of $\l_f$ counts the number of stacked faces on $f\subset \ck$. If we write $\l_f=e^{\Phi_f}$, $\Phi_f$ may be interpreted as a chemical potential. The expression of $\zeta^{(f)}_k$ is given below:

\begin{itemize}
\item Internal face $f=h$:
\be
\zeta_{k}^{(h)}=\l_h \tau_k^{(h)},\qquad \tau_k^{(h)}= d_k\tr_{(k,\rho)}\lt[\overrightarrow{\prod_{v\in\partial h}}P_kg_{ve}^{-1}g_{ve'}P_k\rt],\qquad \rho=\g (k+2)\label{zetakh}
\ee
where $d_k=k+1$ is the dimension of the spin-$k/2$ representation of the subgroup $\Su\subset \Slc$. For any vertex $v$, $e$ or $e'$ is the edge that enter or leave $v$ following the orientation of the face $h$. The trace $\tr_{(k,\rho)}$ is over the Hilbert space $\ch_{(k,\rho)}$ carrying the principal series unitary irrep of $\Slc$. The unitary irrep of $\Slc$ can be decomposed orthogonally into a direct sum of SU(2) irreps: $\ch_{(k,\rho)}\cong  \oplus_{k'=k}^\infty \ch_{k'}$. By this decomposition, we can find an orthonormal basis $|(k,\rho),k',m\rangle$ ($m=-k/2,\cdots,k/2$) known as the canonical basis of $\ch_{(k,\rho)}$. The projection map $P_k$ is given by $P_k:\ch_{(k,\rho)}\to \ch_k$ \cite{hanPI}:
\be
P_k=\sum_{m=-k/2}^{k/2}\big| (k,\rho),k,m\big\rangle\big\langle (k,\rho),k,m\big|\ ,
\ee
where $|(k,\rho),k,m\rangle$ is the basis in $\ch_k\subset\ch_{(k,\rho)}$.

\item Boundary face $f=b$:
\be
\zeta_k^{(b)}=\l_b\tau_k^{(b)},\qquad \tau_k^{(b)}= d_k\lag \ff_1\lt|\overrightarrow{\prod_{v\in\partial b}}P_kg_{ve}^{-1}g_{ve'}P_k\rt|  \ff_2\rag \label{zetakb}
\ee
for some normalized boundary states $\ff_2\otimes \ff_1\in\ch_k\otimes\ch_k^*$. The inner product is on the Hilbert space $\ch_{(k,\rho)}$. For example, in the case that the boundary intertwiners are the coherent intertwiners, $\ff_1,\ff_2$ are coherent states: $\langle\ff^{(i)}_{1}|=\langle k_i,\xi^{(i)}_{e_{1} b}|$ and $|\ff^{(i)}_{2}\rangle=|k_i,\xi^{(i)}_{e_{2} b}\rangle$, where $i=1,\cdots,p_b$ labels the stacked links on the boundary.  $e_1,e_2$ are edges connecting to the boundary. Our following discussion keeps $\vec \ff$ arbitrary.

\end{itemize}


\begin{figure}[t]
\centering
\includegraphics[width=0.6\textwidth]{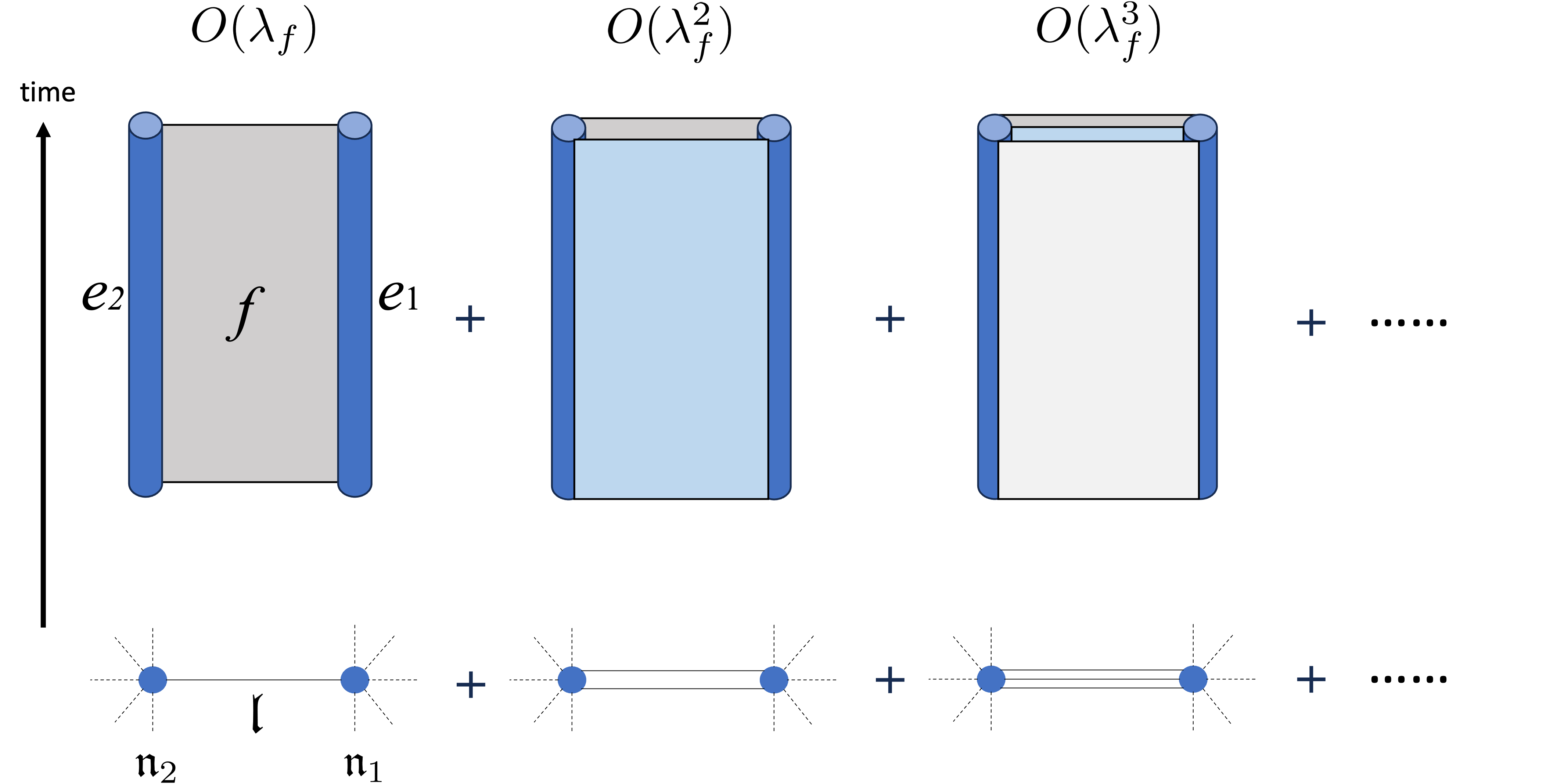}
\caption{The spin-network stack evolves to the spinfoam stack: The spin-network link $\fl$ evolves to the spinfoam face $f$. the spin-network nodes $\fn_1,\fn_2$ evolve to the spinfoam edges $e_1,e_2$. The faces evolves from the dashed links are not shown on this figure. The leftmost complex is the root complex. The power of coupling constant $\l_f$ counts the number of stacked faces.}
\label{sf_stack}
\end{figure}

The amplitude $\sa_\ck$ as a polynomial in $\vec{\l}$ may be viewed as a generating function of spinfoam amplitudes with a certain number of stacked faces at a give $f$. For instance, $\partial_{\l_f} \sa_\ck|_{\l_f=0}$ recovers the amplitude with a single face at $f$.

The integrand of $\sa_\ck$ is invariant under the following continuous gauge transformations
\be
g_{ve}\to x_v g_{ve} u_e,\qquad x_v\in\Slc,\quad u_e\in\Su.
\ee
In order to remove the divergence caused by the $\Slc$ gauge freedom, we choose an edge $e_0$ at every vertex $v$ and fixing $g_{ve_0}=1$ \cite{finite,Kaminski:2010qb} \footnote{Given a spin profile on the entire spinfoam, the sum of all $k_i(f)$'s connecting to an edge $e$ must be an even number constrained by integrating out $ g_{ve}$, otherwise the contribution from the spin profile vanishes. When we express again the nonvanishing partial amplitudes as the integrals over $g_{ve}$, the integrands of these partial amplitudes has the symmetry $g_{ve}\to\pm g_{ve}$ by $\prod D^{k}(-\bm{1})_{m,m'}=(-1)^{\sum k} \prod\delta_{m,m'}$. But this is not a symmetry of the integrand in \eqref{saKampli}, where the sum over spin profiles is not constrained.}. 

The cut-offs $A_f=A_b, A_h$ are the upper bound of the coarse-grained areas allowed for the boundary and intermediate states. In addition, $A_h$ plays the role of regularizing the divergence in the sum over internal spins.

\section{Laplace transform and state sum}\label{Laplace transform}

\subsection{Laplace transform}\label{Internal faces}

In the formulation \eqref{saKampli} of the stack amplitude, $\o_h$ in the integrand sums over the states associated to the stacked internal faces at $h$. The following inverse Laplace transform is useful in computing the state sum  $\o_h$, and we will see in Section \ref{Spinfoam entanglement entropy} that the entanglement entropy is computed in a similar manner.

\begin{lemma}
Given two sequence $\{\a_n\}_{n=1}^\infty $ and $\{\b_n\}_{n=1} ^\infty $ where $\b_n\in\C$ and $\a_n>0$, if $\sum_{n=1}^\infty\left|\beta_{n}\right|e^{-\alpha_{n}\re(s)}<\infty$ for some $\re(s)>0$, we have
\be
\sum_{n,\alpha_{n}< A}\beta_{n}=\sum_{n=1}^{\infty}\beta_{n}\Theta\left(A-\alpha_{n}\right) =\frac{1}{2\pi i}\int_{T-i\infty}^{T+i\infty}\frac{\rmd s}{s}\left[\sum_{n=1}^{\infty}\beta_{n}e^{-\alpha_{n}s}\right]e^{As}.\label{invLaplace}
\ee
for the cut-off $A$ that does not coincide with any $\a_n$. The parameter $T>0$ is greater than the real part of all singularities given by the integrand.
\end{lemma}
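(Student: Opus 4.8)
The plan is to reduce the stated identity to the classical Bromwich--Perron representation of the Heaviside step function and then justify an interchange of summation and contour integration. The starting point is the elementary fact that for $T>0$ and $x\neq 0$,
\be
\frac{1}{2\pi i}\int_{T-i\infty}^{T+i\infty}\frac{e^{xs}}{s}\,\rmd s=\Theta(x),\nonumber
\ee
which I would establish by closing the vertical contour with a large semicircle: to the left when $x>0$, where $e^{xs}\to 0$ as $\re(s)\to-\infty$ and the simple pole at $s=0$ contributes residue $1$; and to the right when $x<0$, where $e^{xs}\to 0$ as $\re(s)\to+\infty$ and no pole is enclosed, giving $0$. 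Specializing $x=A-\alpha_n$ is legitimate since $A\neq\alpha_n$ by hypothesis, and yields $\Theta(A-\alpha_n)$ for each $n$.

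Next I would multiply by $\beta_n$, sum over $n$, and try to move the sum inside the integral. The hypothesis $\sum_n|\beta_n|e^{-\alpha_n\re(s)}<\infty$ for some $\re(s)=\sigma_1>0$ guarantees that the generalized Dirichlet series $F(s)=\sum_n\beta_n e^{-\alpha_n s}$ converges absolutely and uniformly on every vertical line $\re(s)=T\ge\sigma_1$ and is holomorphic there, with $F(s)e^{As}/s=\sum_n\beta_n e^{(A-\alpha_n)s}/s$. Choosing $T\ge\sigma_1$ and $T>0$ places the pole at $s=0$ and all singularities of $F$ strictly to the left of the contour, so the term-by-term representation holds on the integration line; this is precisely the condition ``$T$ greater than the real part of all singularities'' in the statement.

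The decisive step, and the main obstacle, is that the Heaviside integral converges only conditionally: along $s=T+it$ the integrand decays only like $1/|t|$, so one cannot interchange the sum with the improper integral directly. I would therefore work with the truncated integral $I_R=\frac{1}{2\pi i}\int_{T-iR}^{T+iR}F(s)e^{As}s^{-1}\,\rmd s$, where uniform convergence of $F$ on the compact segment permits the exchange and gives $I_R=\sum_n\beta_n J_n(R)$ with $J_n(R)=\frac{1}{2\pi i}\int_{T-iR}^{T+iR}e^{(A-\alpha_n)s}s^{-1}\,\rmd s$. The standard effective Perron estimate $|J_n(R)-\Theta(A-\alpha_n)|\le e^{(A-\alpha_n)T}/(\pi R\,|A-\alpha_n|)$ then controls the error: combined with the separation $\delta=\inf_n|A-\alpha_n|>0$ (valid in our applications, where $\{\alpha_n\}$ is the discrete set of values $\alpha_{p,\vec k}$) and with $\sum_n|\beta_n|e^{-\alpha_n T}<\infty$, it furnishes an $R$-uniform summable majorant $M_n=|\beta_n|\Theta(A-\alpha_n)+\frac{e^{AT}}{\pi\delta}|\beta_n|e^{-\alpha_n T}$. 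Dominated convergence for series then lets me pass $R\to\infty$ inside the sum, replacing each $J_n(R)$ by $\Theta(A-\alpha_n)$ and recovering $\sum_n\beta_n\Theta(A-\alpha_n)=\sum_{\alpha_n<A}\beta_n$, which is the claimed identity. I expect the only delicate bookkeeping to lie in verifying this majorant near $s=0$, i.e.\ in exploiting the separation between $A$ and the nearest $\alpha_n$.
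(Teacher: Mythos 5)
The paper offers no proof of this lemma in the text; it simply cites \cite{lqgee1}. Your argument is the standard self-contained proof via the Bromwich--Perron representation and is essentially correct: the reduction to $\frac{1}{2\pi i}\int_{T-i\infty}^{T+i\infty}\frac{e^{xs}}{s}\,\rmd s=\Theta(x)$ for $x\neq 0$, the observation that the improper integral converges only conditionally so the sum cannot be pushed through directly, and the passage through the truncated integral $I_R$ with the effective Perron estimate plus dominated convergence for series are exactly the right steps. The one point that deserves care is your reliance on $\delta=\inf_n|A-\alpha_n|>0$: this is not implied by the stated hypotheses, which only exclude $A=\alpha_n$ and a priori allow the $\alpha_n$ to accumulate at $A$. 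You correctly flag that the separation holds in the paper's applications --- there the exponents are $\alpha_{p,\vec k}=\sum_i\sqrt{k_i(k_i+2)}\ge \sqrt{3}\,p$, so only finitely many values lie below any fixed bound and the value set is locally finite --- but to prove the lemma in the generality stated you should either add the separation as an explicit hypothesis or invoke the sharper truncated-Perron bound $|J_n(R)-\Theta(A-\alpha_n)|\le e^{(A-\alpha_n)T}\min\bigl(1,(\pi R\,|A-\alpha_n|)^{-1}\bigr)$, whose $R$-independent part already supplies the summable majorant $|\beta_n|\Theta(A-\alpha_n)+e^{AT}|\beta_n|e^{-\alpha_n T}$ with no separation needed. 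With either repair the conclusion stands, and your proof is a legitimate substitute for the citation.
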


\begin{proof}
See \cite{lqgee1}.
\end{proof}




The formula \eqref{invLaplace} has been used extensively in the computation of LQG black hole entropy, see e.g. \cite{Agullo:2010zz,BarberoG:2008dwr}. We apply this formula to the state-sum in $\o_h$ for internal face $h$: The sum over $n$ in \eqref{invLaplace} corresponds to the sum over $p_h$ and $k_1,\cdots,k_{p_h}$ in $\o_h$, and $\a_n$ corresponds to $\a_{p_h,\vec{k}}$. The summand $\b_n$ corresponds to $\prod_{i=1}^{p_h} \zeta_{k_i}^{(h)}(g_{ve},\l_h)$, where $\zeta_{k_i}^{(h)}(g_{ve},\l_h)$ depends on $i$ only through $k_i$. The hypothesis for \eqref{invLaplace} is valid for $\o_h$:


\begin{lemma}\label{zetabound}
$|\zeta_k^{(h)}|\leq \l_h d_k^2$, the equality holds if and only if all $g_{ve}^{-1}g_{ve'}\in \Su$, $e,e'\subset\partial h$, and $\overrightarrow{\prod}_{v\in\partial h} g_{ve}^{-1}g_{ve'} =\pm 1$.

\end{lemma}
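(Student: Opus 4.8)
The plan is to combine the unitarity of the principal series representation with the fact that the projectors $P_k$ are contractions. First I would write $\zeta_k^{(h)}=\l_h\tau_k^{(h)}$ and abbreviate $G_v:=g_{ve}^{-1}g_{ve'}\in\Slc$ for each $v\in\partial h$, so that $\tau_k^{(h)}=d_k\,\tr_{(k,\rho)}\lt[\overrightarrow{\prod_v}P_kG_vP_k\rt]$. Because every factor is flanked by projectors and $P_k^2=P_k$, the operator $B:=\overrightarrow{\prod_v}P_kG_vP_k$ obeys $B=P_kBP_k$ and acts nontrivially only on the $d_k$-dimensional subspace $\ch_k$. Setting $A_v:=P_kG_vP_k|_{\ch_k}$, the $\ch_{(k,\rho)}$-trace collapses to $\tr_{(k,\rho)}[B]=\tr_{\ch_k}[A_1\cdots A_N]$ over a $d_k$-dimensional space. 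Since the principal series is unitary, each $G_v$ is unitary on $\ch_{(k,\rho)}$, hence $\|A_v\|_{\mathrm{op}}\le\|P_k\|\,\|G_v\|_{\mathrm{op}}\,\|P_k\|\le1$. The elementary chain $|\tr_{\ch_k}[A_1\cdots A_N]|\le\|A_1\cdots A_N\|_1\le d_k\,\|A_1\cdots A_N\|_{\mathrm{op}}\le d_k\prod_v\|A_v\|_{\mathrm{op}}\le d_k$ then gives $|\tau_k^{(h)}|\le d_k^2$ and therefore $|\zeta_k^{(h)}|\le\l_h d_k^2$.

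For the equality case I would track exactly when each inequality above saturates. Writing $B=A_1\cdots A_N$ in a singular-value decomposition with singular values $s_i$, the condition $|\tr B|=d_k$ forces $\|B\|_1=\sum_i s_i=d_k$ with each $s_i\le\|B\|_{\mathrm{op}}\le1$; hence every $s_i=1$, so $B$ is unitary, and the phase alignment required for $|\tr B|=\|B\|_1$ then forces $B=e^{i\phi}\,\mathrm{id}_{\ch_k}$. Next, from $\|B\psi\|=\|\psi\|$ for all $\psi$ together with the fact that each $A_v$ is a contraction, the norm must be preserved at every stage of the product; an isometry-by-induction argument starting from the rightmost factor (each $A_v$ is then an isometry, hence unitary, and therefore surjective onto $\ch_k$, feeding the next step) shows that each $A_v$ is itself unitary on $\ch_k$.

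The crux--and the step I expect to be the main obstacle--is the passage from ``$A_v$ unitary on $\ch_k$'' to ``$G_v\in\Su$''. Unitarity of $A_v=P_kG_vP_k|_{\ch_k}$ means $\|P_kG_v\psi\|=\|\psi\|=\|G_v\psi\|$ for all $\psi\in\ch_k$, which forces $G_v\psi\in\ch_k$, i.e. $G_v\ch_k=\ch_k$. I would then invoke the Gelfand--Naimark structure of $\ch_{(k,\rho)}\cong\oplus_{k'\ge k}\ch_{k'}$: the rotation generators preserve each $\ch_{k'}$, whereas the boost generators have a \emph{nonvanishing} matrix element connecting the minimal level $\ch_k$ to $\ch_{k+1}$ (this element never degenerates for principal-series $\rho$). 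Consequently the subalgebra of $\sldc$ stabilizing the subspace $\ch_k$ is precisely $\mathfrak{su}(2)$, and since $\Slc/\Su\cong\mathbb{H}^3$ is contractible the full stabilizer subgroup of $\ch_k$ is exactly $\Su$; thus $G_v\in\Su$ for every $v$. On $\Su$ one has $A_v=D^k(G_v)$, so $B=D^k\!\lt(\overrightarrow{\prod_v}G_v\rt)=e^{i\phi}\,\mathrm{id}$, and the only elements of $\Su$ represented by a scalar in the irreducible spin-$k/2$ representation are the center $\pm1$; hence $\overrightarrow{\prod_v}G_v=\pm1$. The converse is immediate: if all $G_v\in\Su$ and $\overrightarrow{\prod_v}G_v=\pm1$, then $B=D^k(\pm1)=(\pm1)^k\,\mathrm{id}$, whose trace has modulus $d_k$, saturating the bound and giving $|\zeta_k^{(h)}|=\l_h d_k^2$.
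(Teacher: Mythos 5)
Your bound and most of the equality analysis are sound, and they follow a genuinely different route from the paper: you run the chain $|\tr B|\le\|B\|_1\le d_k\|B\|_{\mathrm{op}}\le d_k$ and then extract unitarity of each factor $A_v$ by the isometry-by-induction argument, whereas the paper bounds the trace by $\sum_m|\langle k,m|B|k,m\rangle|$ and analyzes saturation via the Schwartz inequality on each diagonal matrix element. Both routes correctly reduce the equality case to the same crux: showing that $D^{(k,\rho)}(g)\ch_k=\ch_k$ forces $g\in\Su$. Your closing steps (a unitary $B$ with $|\tr B|=d_k$ must be $e^{i\phi}\,\mathrm{id}$; the only scalars in an irreducible $\Su$ representation come from the center $\pm 1$; the converse) are all correct.

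It is exactly at that crux that your justification does not hold up as written. The Lie-algebra computation (boost generators have a nonvanishing matrix element from $\ch_k$ to $\ch_{k+1}$) shows only that no boost generator lies in the Lie algebra of the stabilizer subgroup $H=\{g\in\Slc: D^{(k,\rho)}(g)\ch_k=\ch_k\}$, i.e. that the identity component of $H$ is $\Su$. It does not by itself exclude a \emph{finite} boost $e^{r\sigma_3/2}$, $r\neq 0$, from preserving $\ch_k$ --- compare a rotation by $\pi$ in the plane, which preserves a line whose infinitesimal generator does not. The appeal to contractibility of $\Slc/\Su\cong\mathbb{H}^3$ does not close this: a contractible space can contain a nontrivial discrete set of cosets (as $\Z\subset\R$ shows), so nothing you have said forbids $H$ from having extra components. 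The gap is closable, for instance by the paper's direct computation in the homogeneous-function realization, where $D^{(k,\rho)}(e^{r\sigma_3/2})v=(e^r|z_1|^2+e^{-r}|z_2|^2)^{i\rho/2-1-k/2}P_k'(z_1,z_2)$ manifestly fails to lie in $\ch_k$ unless $r=0$ because the quotient by the weight factor is not holomorphic; or group-theoretically, by observing via the $KAK$ decomposition that if $H$ contained one element of boost parameter $c>0$ it would contain $e^{c\sigma_3/2}\,u\,e^{c\sigma_3/2}$ for all $u\in\Su$, whose boost parameters sweep out an interval, forcing the boost part of $H$ to be all of $\R$ and contradicting your infinitesimal computation. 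As it stands, though, this decisive step is asserted rather than proved, and it is precisely the step the paper's footnote is devoted to.
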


\begin{proof}
We denote by $D^{(k,\rho)}(g)$ the unitary irrep of $g\in\Slc$. For any state $\ff\in\ch_k\subset \ch_{(k,\rho)}$, we have $\Vert P_k D^{(k,\rho)}(g_{ve}^{-1}g_{ve'})P_k |\,\ff\,\rangle\Vert \leq  \Vert \,\ff\,\Vert $, since $D^{(k,\rho)}(g_{ve}^{-1}g_{ve'})$ is unitary and $P_k$ is an orthogonal projection onto $\ch_k$. Similarly, $\Vert \overrightarrow{\prod}_{v} P_k D^{(k,\rho)}(g_{ve}^{-1}g_{ve'})P_k |\,\ff\,\rangle\Vert \leq  \Vert \,\ff\,\Vert $. The equality holds if and only if all $g_{ve}^{-1}g_{ve'}\in \Su$
\footnote{
Suppose any $g\equiv g_{ve}^{-1}g_{ve'}$ not in SU(2). We use the decomposition $g=k_1 e^{Y} k_2$ where $Y=r \sigma_3/2$ with real $r\neq 0$ and $k_1,k_2\in\Su$. Then $D^{(k,\rho)}(g)P_k |\,\ff\,\rangle=D^{(k,\rho)}(k_1)D^{(k,\rho)}(e^{Y})D^{(k,\rho)}(k_2)|\,\ff\,\rangle$ does not belong to $\ch_k$ for any $\ff\in\ch_k$. To see this, it is sufficient to show $D^{(k,\rho)}(e^{Y})|\,v\,\rangle\not\in\ch_k$ for any $|v\rangle\in\ch_k$ and any real $r\neq 0$. Indeed, every state $v$ in $\ch_k$ can be represented as homogeneous functions of $z_1,z_2,z_1^*,z_2^*$ by $v=(|z_1|^2+|z_2|^2)^{i{\rho}/2-1-k/2}P_k(z_1,z_2)$ where $P_k(z_1,z_2)$ is a holomorphic and homogeneous polynomial of $z_1,z_2$ of degree $k$. $D^{(k,\rho)}(e^{Y}) v=(e^r|z_1|^2+e^{-r}|z_2|^2)^{i{\rho}/2-1-k/2}P'_k(z_1,z_2)$ does not belong to $\ch_k$ because $D^{(k,\rho)}(e^{Y}) v/(|z_1|^2+|z_2|^2)^{i{\rho}/2-1-k/2}$ is not even holomorphic, unless $r=0$.  
}
. By using this inequality and the Schwartz inequality, we obtain
\be
\lt|\lag \ff_1\lt|\overrightarrow{\prod_{v\in\partial h}}P_kD^{(k,\rho)}\lt(g_{ve}^{-1}g_{ve'}\rt)P_k\rt|  \ff_2\rag\rt|\leq \Vert \ff_1\Vert\Vert \ff_2\Vert=1,\label{schwartzineq}
\ee
for normalized states $\ff_1,\ff_2$, where the equality holds if and only if for all $g_{ve}^{-1}g_{ve'}\in \Su$ and $\overrightarrow{\prod}_{v\in\partial h} g_{ve}^{-1}g_{ve'}\mid  \ff_2\rangle= e^{i\theta}\mid\ff_1 \rangle$ with $\theta\in\R$. Moreover,
\be
\lt|\tr_{(k,\rho)}\lt[\overrightarrow{\prod_{v\in\partial h}}P_kg_{ve}^{-1}g_{ve'}P_k\rt]\rt|\leq  \sum_{m=-k/2}^{k/2}\lt|\lag k,m\lt| \overrightarrow{\prod_{v\in\partial h}} P_k D^{(k,\rho)}(g_{ve}^{-1}g_{ve'})P_k  \rt| k,m\rag\rt|\leq d_k,\label{2ineq}
\ee
The second inequality in \eqref{2ineq} holds if and only if all $g_{ve}^{-1}g_{ve'}\in \Su$ and $\overrightarrow{\prod}_{v\in\partial h} g_{ve}^{-1}g_{ve'} = \exp(i\theta J_3)$ for some $\theta\in\R$, where $J_3|k,m\rangle = m |k,m\rangle$ \footnote{
Denote $O=\overrightarrow{\prod}_{v\in\partial h} P_k D^{(k,\rho)}(g_{ve}^{-1}g_{ve'})P_k $,
the equality $\sum_m |\langle k,m| O |k,m\rangle| = d_k$ and \eqref{schwartzineq} requires $|\langle k,m| O |k,m\rangle| = 1$ for every $m$. This implies that all $g_{ve}^{-1}g_{ve'}$ must be in SU(2). Let $G = \overrightarrow{\prod}_{v\in\partial h} g_{ve}^{-1}g_{ve'} \in \text{SU(2)}$. The operator $O$ now simply becomes the SU(2) representation matrix $D^{(k/2)}(G)$. The condition $|\langle k,m| D^{(k/2)}(G) |k,m\rangle|=1$ also means $|k,m\rangle$ must be an eigenvector of $D^{(k/2)}(G)$. Since this must hold for all basis vectors $|k,m\rangle$, the operator $D^{(k/2)}(G)$ must be diagonal in this basis. The basis $|k,m\rangle$ is the eigenbasis of the generator $J_3$, so $G$ must be a rotation about the z-axis, i.e., $G = \exp(i\theta J_3)$.
}. Then $\tr_{(k,\rho)}\lt[\overrightarrow{\prod}_{v\in\partial h} P_kg_{ve}^{-1}g_{ve'}P_k\rt]=\frac{\sin \left(  (k+1)\theta/2\right)}{\sin \left({\theta }/{2}\right)}$ is the SU(2) character. $|\tr_{(k,\rho)}\lt[\overrightarrow{\prod}_{v\in\partial h}P_kg_{ve}^{-1}g_{ve'}P_k\rt]|=d_k$ if and only if $\theta=0,2\pi$, i.e. $\overrightarrow{\prod}_{v\in\partial h} g_{ve}^{-1}g_{ve'} =\pm 1$.

\end{proof}


\begin{lemma}
For some $\re(s)>0$,
\be
\sum_{p=1}^\infty\sum_{k_1\cdots k_p=1}^\infty\prod_{i=1}^{p}\lt| \zeta^{(h)}_{k_i}\rt|e^{-\a_{p,\vec{k}}\re(s) }<\infty.\label{zetainequality}
\ee

\end{lemma}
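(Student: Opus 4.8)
The plan is to exploit the product structure of the summand, which lets the multiple sum factorize into a geometric series in $p$, and then to force convergence by choosing $\re(s)$ large enough.

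First I would note that both factors in the summand split over the index $i$. Since $\a_{p,\vec{k}}=\sum_{i=1}^{p}\sqrt{k_i(k_i+2)}$, the exponential weight factorizes as $e^{-\a_{p,\vec{k}}\re(s)}=\prod_{i=1}^{p}e^{-\sqrt{k_i(k_i+2)}\,\re(s)}$, while $\prod_{i=1}^{p}|\zeta^{(h)}_{k_i}|$ is already a product. Hence, for fixed $p$, the sums over $k_1,\dots,k_p$ decouple, giving
\[
\sum_{k_1,\dots,k_p=1}^{\infty}\prod_{i=1}^{p}\bigl|\zeta^{(h)}_{k_i}\bigr|\,e^{-\a_{p,\vec{k}}\re(s)}=F\bigl(\re(s)\bigr)^{p},\qquad F(\sigma):=\sum_{k=1}^{\infty}\bigl|\zeta^{(h)}_{k}\bigr|\,e^{-\sqrt{k(k+2)}\,\sigma}.
\]
Summing over $p$ then reduces the left-hand side of \eqref{zetainequality} to the geometric series $\sum_{p=1}^{\infty}F(\re(s))^{p}$, which converges if and only if $F(\re(s))<1$.

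Next I would control $F(\sigma)$ uniformly in the group variables $\{g_{ve}\}$ by invoking Lemma \ref{zetabound}, which gives the $g$-independent bound $|\zeta^{(h)}_k|\leq \l_h d_k^2=\l_h(k+1)^2$. Thus $F(\sigma)\leq \bar F(\sigma):=\l_h\sum_{k=1}^{\infty}(k+1)^2 e^{-\sqrt{k(k+2)}\,\sigma}$. Since $\sqrt{k(k+2)}\geq k$, each term is dominated by $(k+1)^2 e^{-k\sigma}$, so $\bar F(\sigma)$ is finite for every $\sigma>0$, being a polynomial against a geometric decay.

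Finally I would observe that $\bar F(\sigma)\to 0$ as $\sigma\to\infty$: for $\sigma\geq 1$ each summand is dominated by its value at $\sigma=1$ and tends to zero pointwise, so dominated convergence applies. Therefore there exists $\sigma_0>0$ with $\bar F(\sigma_0)<1$, and for any $s$ with $\re(s)=\sigma_0$ we get $F(\sigma_0)\leq \bar F(\sigma_0)<1$, making the geometric series finite and establishing \eqref{zetainequality}. There is no real obstacle here; the only point requiring mild care is that the bound on $|\zeta^{(h)}_k|$ be uniform in $\{g_{ve}\}$, so that a single choice of $\re(s)$ works for all group configurations at once—precisely what Lemma \ref{zetabound} delivers.
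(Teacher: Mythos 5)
Your proposal is correct and follows essentially the same route as the paper: both factorize the multi-sum into a geometric series in $p$ using the bound $|\zeta^{(h)}_k|\leq \l_h d_k^2$ from Lemma \ref{zetabound}, and both conclude by noting that the single-$k$ sum drops below $1$ for sufficiently large $\re(s)$. Your write-up simply spells out the monotonicity/limit argument in more detail than the paper does.
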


\begin{proof}
By the bound of $|\zeta_k^{(h)}|$ proven in Lemma \ref{zetabound}, the left hand side of \eqref{zetainequality} has the bound
\be
\sum_{p=1}^\infty\sum_{k_1\cdots k_p=1}^\infty\prod_{i=1}^{p}\lt| \zeta_{k_i}^{(h)}\rt|e^{-\a_{p,\vec{k}}\re(s) }
\leq  \sum_{p=1}^\infty\lt[\l_f\sum_{k=1}^\infty d_k^{2}e^{-\re(s)\sqrt{k(k+2)}}\rt]^p
\ee
$\sum_{k=1}^\infty d_k^{2}e^{-\re(s)\sqrt{k(k+2)}}$ decrease monotonically as $\re(s)$ grows. The sum in the square bracket is less than 1 for sufficiently large $\re(s)$, then the sum of $p$ converge.

\end{proof}

Applying \eqref{invLaplace} to $\o_h$ gives
\be
\o_h=\frac{1}{2\pi i}\int\limits_{T-i\infty}^{T+i\infty}\frac{\rmd s_h}{s_h}e^{A_h s_h}\sum_{p_h=1}^{\infty}\lt[\sum_{k=1}^\infty\zeta_k^{(h)} e^{-s_h\sqrt{k(k+2)}}\rt]^{p_h}=\frac{1}{2\pi i}\int\limits_{T-i\infty}^{T+i\infty}\frac{\rmd s_h}{s_h}e^{A_h s_h}\frac{\sum_{k=1}^\infty \zeta^{(h)}_k e^{- s_h\sqrt{k(k+2)}}}{1-\sum_{k=1}^\infty\zeta^{(h)}_k e^{- s_h\sqrt{k(k+2)}}}\ .\label{shintegral}
\ee
For a large cut-off $A_h$, the integral is dominated by the pole (in the $s_h$-plane) of the integrand with the largest $\re(s_h)$. The pole with $\re(s_h)>0$ can only be obtained by 
\be
\sum_{k=1}^\infty\zeta^{(h)}_k(g_h)\, e^{- s_h(g_h)\sqrt{k(k+2)}}=1.\label{poleeqn}
\ee
The solution $s_h(g_h)$ depends on the group variables $g_h=\{g_{ve}\}_{e\subset\partial h}$ and $\l_h$. We can obtain the bound of the largest $\re(s_h)$ at the pole by
\be
1\leq \sum_{k=1}^\infty\lt|\zeta^{(h)}_k(g_h) \rt|e^{- \re\lt(s_h(g_h)\rt)\sqrt{k(k+2)}}\leq \l_h\sum_{k=1}^\infty d_k^{2}e^{- \re(s_h(g_h))\sqrt{k(k+2)}}.\label{bound28}
\ee
The right-hand side monotonically decreases as $\re(s)$ grows. Therefore, the solution $s_h(g_h)$ of \eqref{poleeqn} satisfies $\re(s_h(g_h))\leq \b_h$, where $\b_h>0$ satisfies 
\be
\l_h\sum_{k=1}^\infty d_k^{2}e^{- \b_h\sqrt{k(k+2)}}=1,\qquad 
\ee
It is clear that $\b_h$ depends on $\l_h$. When $\re(s_h(g))$ reaches the maximum: $\re(s_h(g))=\b_h$, the inequality \eqref{bound28} implies $\sum_{k=1}^{\infty}|\zeta_{k}^{(h)}(g_h)|e^{-\beta_{h}\sqrt{k(k+2)}}=1$, then the equality $|\zeta_{k}^{(h)}(g_h)|=\lambda_{h}d_{k}^{2}$ in Lemma \ref{zetabound} must hold, and it restricts $g_h$ to
\be
\text{all}\  g_{ve}^{-1}g_{ve'}\in \Su,\quad e,e'\subset\partial h, \qquad \overrightarrow{\prod_{v\in\partial h}} g_{ve}^{-1}g_{ve'} =1\ .\label{g0solution} 
\ee
This restriction implies $\zeta_k^{(h)}=\l_h d_k^2$ and $s_h(g)=\b_h$. 

Note that the other case with $\overrightarrow{\prod}_{v\in\partial h} g_{ve}^{-1}g_{ve'}= -1$ is ruled out, because in this case, Eq.\eqref{poleeqn} gives 
\be
\l_h\sum_{k=1}^\infty (-1)^k d_k^{2}e^{- s_h\sqrt{k(k+2)}}=1,
\ee
which implies $\re(s_h)<\b_h$ strictly, see Appendix \ref{Solutions of sum=lambda equations}.


For large $A_h$, $\o_h$ is given by
\be
&&\o_h= e^{A_h s_h(g_h)}\cf_h(g_h) +\Fr_h ,\qquad \cf_h(g_h)=\frac{1}{s_h(g)\lag \sqrt{k(k+2)} \rag_{h,g}},\label{omegah31}\\
&&\lag \sqrt{k(k+2)} \rag_{h,g}=\sum_{k=1}^\infty \sqrt{k(k+2)} \zeta^{(h)}_k(g_h) e^{- s_h(g_h)\sqrt{k(k+2)}},
\ee
The maximum of $\re(s_h(g_h))$ equals $\beta_h$. All other poles\footnote{Other possible poles includes $s_h=0$ and $s_h(g)$ that solves \eqref{poleeqn} but cannot reach $\re(s_h(g))=\b_h$ (the solution of \eqref{poleeqn} is generally non-unique). } in \eqref{shintegral} have $\re(s_h)$ strictly less than $\b_h$. Their contributions collected by $\Fr_h$ are subleading in $\sa_\ck$ since $e^{A_h\re(s_h)}\ll e^{A_h\b_h}$, as we will discuss in a moment.

\subsection{Stationary phase approximation}\label{Stationary phase approximation}

There are two different asymptotic limits for the stack amplitude $\sa_\ck$: For $h$ an internal face in $\ck$, $A_h$ is the cut-off of internal spins, $A_h\to\infty$ means to remove the cut-off and include all spin in the sum. However, $A_b$ are the coarse-grained areas as macroscopic observables associated to the boundary state. $A_b$ relates to the dimensionful area by $\Ar(b)=4\pi\g \ell_P^2 A_b$, so $A_b\to\infty $ relates to the classical limit $\hbar\to0$. In this subsection, we focus on the limit $A_h\to\infty$ whereas keeping $A_b$ fixed.

We apply the approximation \eqref{omegah31} for $\o_h$ to $\sa_\ck$, neglecting $\Fr_h$:
\be
\sa_\ck\simeq e^{\sum_h \beta_h A_h}\int \prod_{(v,e)}\rmd g_{ve}\, e^{S(g)}\,{\prod_h \cf_h(g_h)}\prod_b\o_b\lt(A_b,\{g_{ve}\}_{e\subset\partial b},\vec\ff,\l_b\rt),\qquad S(g)=\sum_h A_h \lt[s_h(g_h)-\b_h\rt] .\label{AKexponen0}
\ee
We apply the stationary phase approximation to the integration over $\{g_{ve}\}_{e\in E_{\rm int}}$ for uniformly large $A_h$. We denote by $E_{\rm int}$ the set of the edges that does not connect to the boundary. We aim at an asymptotic expansion in cut-offs $A_h$ of internal spins. The ``action'' $S$ relating only to internal faces $h$ only depends on $\{g_{ve}\}_{e\in E_{\rm int}}$. The real part of $S$ reaches the maximum $\re(S)=0$ when \eqref{g0solution} is satisfied by $g_h$ for all $h$, and it implies $S=0$. We denote by $g_{\rm 0,int}(\vec{u})$ the set of $g_{ve}$, $e\in E_{\rm int}$, solving \eqref{g0solution} for all internal faces $h$. $\vec{u}$ parametrizes the solutions in the case that the solution is not unique. We denote by $\cc_{\rm int}$ the space of solutions. $\vec{u}$ are the coordinates on $\cc_{\rm int}$. The space of solutions $\cc_{\rm int}$ closely relates to the moduli space of SU(2) flat connections on the ambient 4-manifold $\sm_4$. The detailed discussion is given in the companion paper \cite{spinfoamstack1}.

The space $\cc_{\rm int}$ is the critical manifold of $S$. Indeed, at any $g_{\rm 0,int}(\vec u)\in \cc_{\rm int}$
\be
\frac{\partial}{\partial{g_{ve}}}\sum_h A_h s_h\Big|_{g_{\rm 0,int}}=\sum_{h;e\subset\partial h}A_h\sum_{k_h=1}^\infty\frac{\partial}{\partial{g_{ve}}}\zeta^{(h)}_{k_h}\Big|_{g_{\rm 0,int}} e^{- \b_h\sqrt{k_h(k_h+2)}}=0.\label{EOMclos}
\ee
To compute the derivative with respect to $g_{ve}$, we deform $g_{ve}\to g_{ve}(1+t_{IJ}J^{IJ})$ and compute the derivative with respect to $t_{IJ}$. The derivative of $\zeta^{(h)}_k$ at $g_{\rm 0,int}$ is proportional to $\tr_{(k,\rho)}[P_kJ^{IJ}P_k]$, which vanishes for all so(1,3) generator $J^{IJ}$. Therefore, $\partial_g S=0$ on $\cc_{\rm int}$.

The exponent $S$ satisfies $\re(S)\leq 0$, and $\re(S)=\partial_g S=0$ only on $\cc_{\rm int}$. Therefore, we only focus on the integral in \eqref{AKexponen0} on a neighborhood $\cu_{\rm 0,int}$ of the critical manifold $\cc_{\rm int}$, where the integral receives the dominant contribution \cite{stationaryphase}. The dominant asymptotic behavior of $\sa_{\ck}$ as $A_h\to\infty$ gives the exponential growth \footnote{$\cf_h=\lt[\b_h\l_h\sum_{k=1}^\infty \sqrt{k(k+2)} d_k^2 e^{- \b_h\sqrt{k(k+2)}}\rt]^{-1}$ is constant on $\cc_{\rm int}$. }
\be
\sa_{\ck}
&=& e^{\sum_h \beta_h A_h}\bA^{-\mathscr{D}_{\rm int}}\int_{\cc_{\rm int}}\rmd \mu(\vec u) \int \prod_{(v,e_b)} \rmd g_{ve_b}\, {\prod_b\o_b|_{g_{\rm 0,int}(\vec u)}}\lt[\phi(\vec{u})+O(\bA^{-1})\rt] ,\label{AKexponen}
\ee
where $e_b$ denotes the edges connecting to the boundary. The exponent $\mathscr{D}_{\rm int}>0$ is one half of the dimension of the Hessian matrix $\partial^2_g S$, and $\phi(\vec{u})$ relates to the determinant of the Hessian matrix. The nondegeneracy of the Hessian matrix is proven in \cite{spinfoamstack1} for simply connected $\ck$. $\bar{A}$ is the mean value of $\{A_h\}_h$. The asymptotics is derived by expressing $A_h=\bA a_h$ and scale $\bA\to \infty$. This asymptotic behavior shows that the divergence of $\sa_\ck$ is at the prefactor $e^{\sum_h \beta_h A_h}\bA^{-\mathscr{D}_{\rm int}}$, which can be removed by normalizing the amplitude or equivalently the state $\psi$ in \eqref{psi8}.


The above derivation ignores $\Fr_h$ in \eqref{omegah31}, which collects the contributions from the poles with $s_h=s_h'(g)$ strictly less than $\b_h$. In the integrand of \eqref{AKexponen0}, the corrections from some $\Fr_{h_i}$'s ($i=1,\cdots,m$ labels a subset of $h$) gives $\exp(S')$ where $S'=\sum_{i=1}^m A_{h_i}[s_{h_i}'(g)-\b_{h_i}]+\sum_{h\neq h_i}A_{h}[s_{h}(g)-\b_{h}]$. But $\re(S')$ is strictly less than zero. Even if the solution of $\partial_g S'=0$ exists, the corresponding contribution to the stationary phase approximation is exponentially small in comparison with \eqref{AKexponen0} or \eqref{AKexponen} when $A_h\gg1$. This justifies ignoring $\Fr_{h}$ in the above discussion.

\section{Modular entanglement entropy}\label{Modular entanglement entropy}

Consider a 4-manifold $\sm_4$ only has two disconnected boundary components $\Sig_0$ and $\Sig$, which are interpreted as the initial and final slices. We assume the root complex $\ck$ to be a partition of $\sm_4$, and the boundaries of $\ck$ are the root graphs $\G\in\Sig$ and $\G_0\in\Sig_0$. Any spinfoam amplitude on $\sm_4$ defines a linear map between the LQG Hilbert spaces on $\Sig_0$ and $\Sig$. Given any initial state $\psi_0$ of spin-network stack on $\Sig_0$, its evolution by the stack amplitude $\sa_\ck$ gives a state $\psi$ on $\Sig$, schematically,
\be
|\psi\rangle = \widehat{\sa}_\ck \mid \psi_0\rangle.\label{spinfoamstate}
\ee
We call $\psi$ the spinfoam state. The detailed expression of the state is given in Section \ref{the spinfoam state}. See FIG.\ref{psievo} for illustration. The states $\psi_0$ and $\psi$ are generally the spin-network stacks based on $\G_0$ and $\G$.

\begin{figure}[h]
\centering
\includegraphics[width=0.2\textwidth]{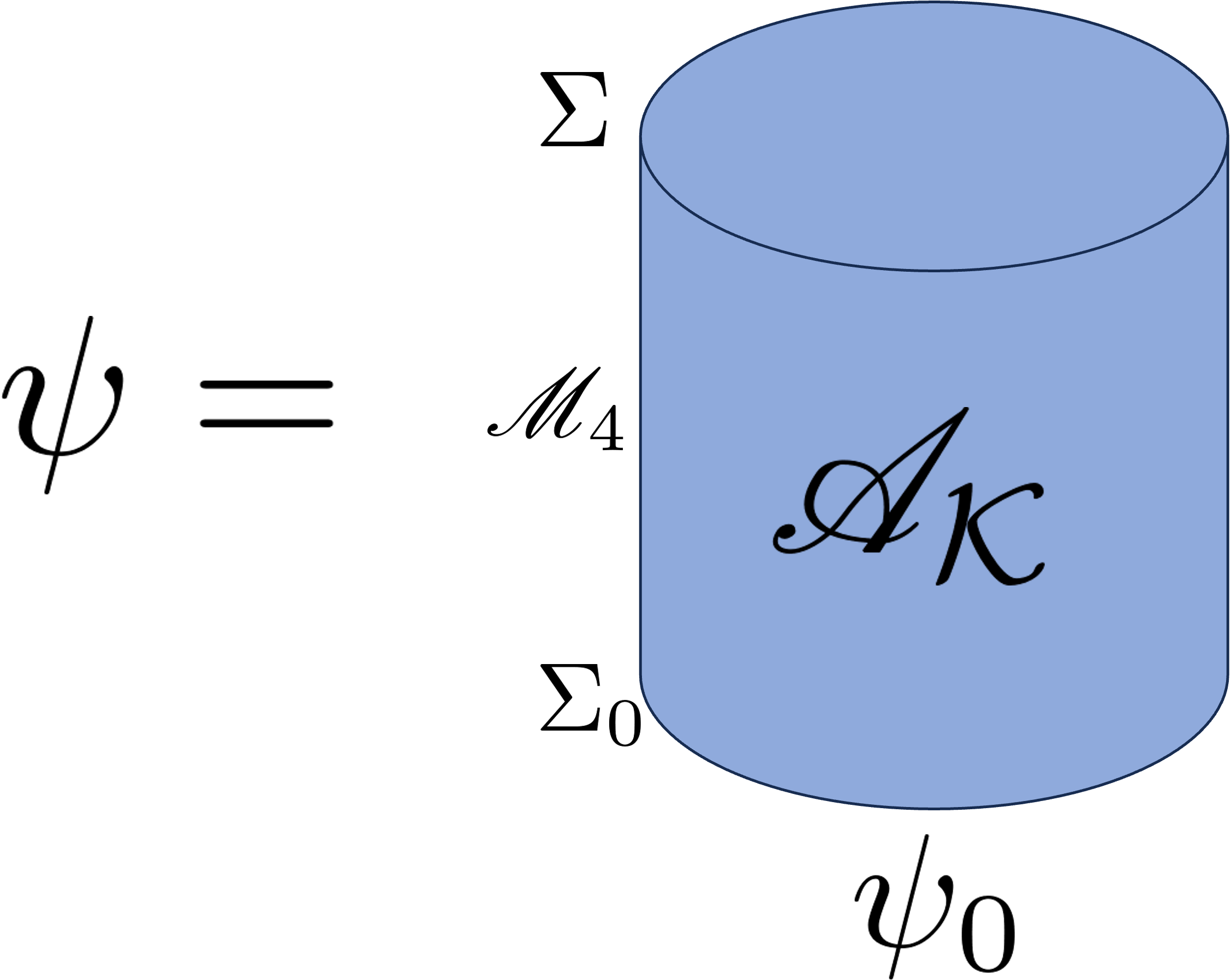}
\caption{A spinfoam state $\psi$ with the initial state $\psi_0$.}
\label{psievo}
\end{figure}

Consider the graph $\G_{\rm max}$ that includes the root graph $\G$ and all graphs $G_\G$ of $\psi$ obtained from $\G$ by stacking links. The maximal graph $\G_{\rm max}$ exists due to the cut-offs $A_b$ in $\sa_\ck$. The Hilbert space $\ch_\Sig$ where $\psi$ lives is spanned by the cylindrical functions on $\G_{\rm max}$:
\be
\ch_\Sig=\bigoplus_{\{k_\fl\in\mathbb{N}_0\}_{\fl\subset \G_{\rm max}}}\bigotimes_{\fn\in \G_{\rm max}}\ch_\fn (\vec{k})
\ee
where $k=2j$ and $\ch_\fn (\vec{k})$ is the Hilbert space of intertwiners at the node $\fn$. The cylindrical function with trivial link ($k_\fl =0$) is allowed in $\ch_\Sig$. The Hilbert space $\ch_{\vec A}$ is a proper subspace in $\ch_\Sig$.

We make an arbitrary bipartition of $\Sig$ into the region $R$ and the complement $\bR$, assuming that the interface $\Fs=\partial R$ only intersect with links of $\G_{\rm max}$. The links intersecting $\Fs$ are denoted by $\fl_*$. The Hilbert space $\ch_\Sig$ has the following decomposition \cite{lqgee1}\footnote{A basis of $\ch_\Sig$ is given by the spin-network states, each of which is a pure tensor products of the intertwiners at the nodes and thus is factorized into $R$ and $\bR$. The direct sums over intertwiners are inside $R$ and $\bR$. The direct sums over spins can be split into the sums over the spins inside $R$ and $\bR$ and the sums over the spins at $\Fs$} 
\be
\ch_{\Sig}=\bigoplus_{\mu} \ch_{R,\mu}\otimes\ch_{\bR,\mu},\qquad \mu=(\calp,\cj),\label{decompchSig}
\ee
where $\calp$ is the set of links $\fl_*$ intersecting with $\Fs$ carrying nonzero spins, and $\cj$ is the set of spins on these links. The Hilbert spaces $\ch_{R,\mu}$ and $\ch_{\bR,\mu}$ contains the bulk degrees of freedoms.

We consider the type-I von Neumann algebra $\cm$ of the observables in $R$ by
\be
\cm=\bigoplus_\mu \cm_\mu,\qquad \cm_\mu= \cb(\ch_{R,\mu})\otimes \bm{I}_{\bR,\mu}
\ee
where $\cb(\ch_{R,\mu})$ denotes the space of bounded operators on $\ch_{R,\mu}$, and $ \bm{I}_{\bR,\mu}$ is the identity operator on $ \ch_{\bR,\mu}$. It is shown in \cite{lqgee1} that $\cm$ is isomorphic to the weakly closure of the algebra generated by the holonomies and fluxes in $R$. The von Neumann algebra $\cm$ is not a factor due to the direct sum over $\mu$: The algebra $\cm$ has a non-trival center generated by area operators at the punctures on $\Fs$. The data $\mu$ collects the eigenvalues of these area operators.

For any $\bm \co=\oplus_\mu\bm \co_\mu\in\cm$ where $\bm \co_\mu=\bm \co_{R,\mu}\otimes \bm I_{\bR,\mu}\in\cm_\mu$ such that $\bm \co_{R,\mu}$ is in trace-class, we define the renormalized trace by \cite{lqgee1,Colafranceschi:2023moh}
\be
\hat{\tr}(\bm \co)=\sum_\mu \hat{\tr}_\mu (\bm \co_\mu), \quad \text{where}\quad \hat{\tr}_\mu(\bm \co_\mu)=\kappa_\mu \tr_{R,\mu} (\bm \co_{R,\mu}),\quad \kappa_\mu\in\Z_+. 
\ee
where $\tr_{R,\mu}$ is the standard trace on $\ch_{R,\mu}$. Here the set of $\kappa_\mu$ is the freedom in defining the renormalized trace on non-factor type-I algebra, and $\kappa_\mu$ can be interpreted as the dimension of a hidden Hilbert space relating to the edge modes at $\Fs$. There is a canonical choice of the hidden Hilbert space to be $\otimes_{\fl_*}\ch_{k_{\fl_*}}$ carrying the SU(2) gauge transformations at $\Fs$, then
\be
\kappa_\mu=\prod_{\fl_*}d_{k_{\fl_*}}.
\ee
Under this choice, the renormalized trace coincides with the standard partial trace of $R$ on a enlarged Hilbert space that relaxes the gauge invariance at $\Fs$ \cite{lqgee1,HanTobin,Lin:2018bud}.

Given any state $\psi=\sum_\mu\psi_\mu\in\ch_\Sig$, the pure-state density operator is given by $\bm \rho =|\psi\rangle\langle\psi|$. A difference between the discussions below and in \cite{lqgee1} is that we do not assume the state $\psi$ is normalized. We only focus on the diagonal block $\bm\rho_{\mu\mu}=|\psi_\mu \rangle\langle\psi_\mu|$, since only the diagonal blocks contributes to the expectation value of any operator $\bm{\co}\in \cm$. We define the reduced density matrix $\bm\rho_{R,\mu}$ in each $\mu$-sector and the density operator $\hat{\bm\rho}_\cm\in\cm$ by 
\be
\bm\rho_{R,\mu}=\mathrm{Tr}_{\bR,\mu}(\bm\rho_{\mu\mu}),\qquad \hat{\bm\rho}_\cm=\bigoplus_\mu \kappa_\mu^{-1}\bm\rho_{R,\mu}\otimes\bm{I}_{\bR,\mu}.
\ee
Note that both $\bm\rho_{R,\mu}$ and $\hat{\bm\rho}_\cm$ are not normalized. It is clear that $\hat{\tr}(\hat{\bm\rho}_\cm)=\tr(\bm\rho)$ where $\tr$ is the standard trace on $\ch_\Sig$.

We define the replica partition function ${\cal Z}_{n}$ by
\be
{\cal Z}_{n}=\hat{\mathrm{Tr}}\left(\hat{\bm{\rho}}_{{\cal M}}^{n}\right)=\sum_{\mu}\kappa^{1-n}_\mu\cz_{\mu,n},\qquad \cz_{\mu,n}=\mathrm{Tr}_{R,\mu}\lt(\bm{\rho}_{R,\mu}^{n}\rt).
\ee
The partition function $\cz_n$ can be viewed as an analog of the thermal partition function with the temperature $1/n$: Given that every $\bm\rho_{R,\mu}$ is a positive operator on $\ch_{R,\mu}$, we write $\bm\rho_{R,\mu}=\exp(-\bm{H}_{R,\mu})$ where $\bm{H}_{R,\mu}$ is the modular Hamiltonian in the $\mu$-sector. The modular Hamiltonian of $\hat{\bm\rho}_\cm$ is given by
\be
\hat{\bm H}_R=\bigoplus_\mu\lt[\bm{H}_{R,\mu}+\log(\kappa_\mu)\bm{I}_{R,\mu}\rt]\otimes \bm{I}_{\bR,\mu},\qquad \hat{\bm\rho}_\cm=\exp\lt(-\hat{\bm H}_R\rt).
\ee
The partition function $\cz_n=\hat{\tr}[\exp(-n \hat{\bm H}_R)]$ with the analytically continued $n$ can be viewed as a canonical partition function with the Hamiltonian $\hat{\bm H}_R$ and the inverse temperature $n$. The thermal entropy from $\cz_n$ with respect to the temperature is called the modular entanglement entropy of the state $\psi$ \cite{Dong:2016fnf}:
\be
\widetilde{S}_n(\bm\rho,R)=-n^2\partial_n\lt[\frac{1}{n}\log\lt(\cz_n\rt)\rt].
\ee
The explicit relation between $\widetilde{S}_n$ and the density operator $\hat{\bm\rho}_\cm$ can be derived
\be
\widetilde{S}_n(\bm\rho,R)	
&=&\log\lt(\cz_n\rt)+n\cz_n^{-1}\hat{\tr}\lt[\hat{\bm H}_R\exp\lt(-n \hat{\bm H}_R\rt)\rt]=\log\lt(\cz_n\rt)-\cz_n^{-1}\hat{\tr}\lt(\hat{\bm\rho}^n_\cm\log \hat{\bm\rho}_\cm^n\rt)\nonumber\\
&=&-\hat{\mathrm{Tr}}\left(\frac{\hat{\bm{\rho}}_{{\cal M}}^{n}}{{\cal Z}_{n}}\log\frac{\hat{\bm{\rho}}_{{\cal M}}^{n}}{{\cal Z}_{n}}\right)
\ee
where ${\hat{\bm{\rho}}_{{\cal M}}^{n}}/{{\cal Z}_{n}}$ is the analog of the thermal density operator. When $n\to 1$, the thermal density operator ${\hat{\bm{\rho}}_{{\cal M}}^{n}}/{{\cal Z}_{n}}$ reduced to the normalized density operator, then $\widetilde{S}_n(\bm\rho,R)$ reduces to the von Neumann entanglement entropy $S(\bm\rho,R)$ with respect to $\cm$ \cite{lqgee1}.

\section{Entanglement entropy of spinfoam state}\label{Spinfoam entanglement entropy}

\subsection{The spinfoam state}\label{the spinfoam state}

In this section, we apply the above formalism to compute the entanglement entropy of the spinfoam state $\psi$ in \eqref{spinfoamstate}. The spinfoam state $\psi$ has the following expression
\be
\psi=\int \prod_{(v,e)}\rmd g_{ve}\, \prod_{f = h,b,b_0}\o_f , \qquad \o_f=\sum_{p_f=1}^{\infty}\prod_{i=1}^{p_f}\sum_{k_{i}=1}^{\infty}\zeta^{(f)}_{k_{i}}\Theta\left(A_{f}-\alpha_{p_f,\vec{k}}\right) .\label{psistate00}
\ee
The spinfoam has two boundaries at $\Sig$ and $\Sig_0$. We denote by $b$ and $b_0$ the boundary faces intersecting with $\Sig$ and $\Sig_0$ respectively. The intersections are the links of $\G$ and $\G_0$. The initial stack state $\psi_0$ is based on the graphs generated from the root graph $\G_0$. $\psi$ is a function of SU(2) holonomies on the graphs generated from $\G$. In \eqref{psistate00}, $\zeta_k^{(f)}$ for $f=h,b,b_0$ are given respectively by
\be
\zeta_{k_i}^{(h)}&=&\l_h d_{k_i}\tr_{(k_i,\rho_i)}\lt[\overrightarrow{\prod_{v\in\partial h}}P_{k_i}g_{ve}^{-1}g_{ve'}P_{k_i}\rt],\qquad
\zeta_{k_i}^{(b)}=\l_b d_{k_i}\tr_{({k_i},\rho_i)}\lt[\lt(\overrightarrow{\prod_{v\in\partial b}}P_{k_i}g_{ve}^{-1}g_{ve'}P_{k_i}\rt) H_{\fl(i)}\rt],\\
\zeta_{k_i}^{(b_0)}&=&\l_{b_0} d_{k_i}\lag \ff_1^{(i)}\lt| \overrightarrow{\prod_{v\in\partial b_0}}P_{k_i}g_{ve}^{-1}g_{ve'}P_{k_i}\rt|  \ff_2^{(i)}\rag.
\ee
The links $\fl(i)$, $i=1,\cdots,p_b$, are in $\Sig$ and belong to the boundary of the stacked faces associated to $b$. In the trace, the SU(2) holonomy $H_{\fl(i)}$ acts on $\ch_{k_i}\subset \ch_{(k_i,\rho_i)}$. The state $\psi$ is a function of the SU(2) holonomies $H_{\fl(i)}$ in $\Sig$. $\ff_{1}^{(i)},\ff_2^{(i)}$  relate to the initial state $\psi_0$.

Due to the bipartition $\Sig=R\cup\bR$, we pick up the faces $b$ whose boundary on $\Sig$ intersecting with the interface $\Fs$. In the following, we use $b\cap\Fs\neq \emptyset$ to characterize these (root) faces. $\mu=(\calp,\cj)$ in \eqref{decompchSig} is given by the spin data associated to these faces.
\be
\mu=\lt(\{p_b,\vec{k}(b)\}\rt)_{b\cap \Fs\neq \emptyset},
\ee
where $\vec{k}(b)=(k_1,\cdots,k_{p_b})$ and $p_b,k_i\in\Z_+$. 

When we write $\psi=\sum_\mu \psi_\mu$, the state in each $\mu$-sector, $\psi_\mu\in \ch_{R,\mu}\otimes\ch_{\bR,\mu}$, is given by 
\be
\psi_\mu=\int \prod_{(v,e)}\rmd g_{ve}\prod_{f \cap \Fs= \emptyset}\o_f \prod_{b\cap \Fs\neq \emptyset}\prod_{i=1}^{p_b}\zeta^{(b)}_{k_{i}}\Theta\left(A_{b}-\alpha_{p_b,\vec{k}}\right).\label{psimustate00}
\ee

The density operator contains both $|\psi\rangle$ and the conjugate state $\langle\psi|$. The complex conjugate $\psi^*$ is given by $\psi^*=\sum_\mu \psi_\mu^*$ and
\be
\psi_\mu^*=\int \prod_{(v,e)}\rmd g_{ve}\prod_{f \cap \Fs= \emptyset}\o_f^* \prod_{b\cap \Fs\neq \emptyset}\prod_{i=1}^{p_b}\zeta^{(b)*}_{k_{i}}\Theta\left(A_{b}-\alpha_{p_b,\vec{k}}\right).
\ee
The integrand contains the complex conjugates of $\zeta_k^{(h)}$ and $\zeta_k^{(b)}$:
\be
\zeta_{k_i}^{(h)*}&=&\l_h d_{k_i}\tr_{(k_i,\rho_i)}\lt[\overleftarrow{\prod_{v\in\partial h}}P_{k_i} g_{ve'}^\dagger (g_{ve}^\dagger)^{-1}P_{k_i}\rt],\qquad 
\zeta_{k_i}^{(b)*}=\l_b d_{k_i}\tr_{({k_i},\rho_i)}\lt[\lt(\overleftarrow{\prod_{v\in\partial b}}P_{k_i}g_{ve'}^\dagger (g_{ve}^\dagger)^{-1}P_{k_i}\rt) H_{\fl(i)}^\dagger\rt],\\
\zeta_{k_i}^{(b_0)*}&=&\l_{b_0} d_{k_i}\lag \ff_2^{(i)}\lt| \overleftarrow{\prod_{v\in\partial b_0}}P_{k_i}g_{ve'}^\dagger (g_{ve}^\dagger)^{-1} P_{k_i}\rt|  \ff_1^{(i)}\rag.
\ee
By the invariance of the Haar measure $\rmd g_{ve}=\rmd (g_{ve}^\dagger)^{-1}$, the conjugate state $\psi_\mu^*$ are given by $\psi_\mu$ in \eqref{psimustate00} with flipping orientations of all faces. 

\begin{figure}[t]
\centering
\includegraphics[width=0.7\textwidth]{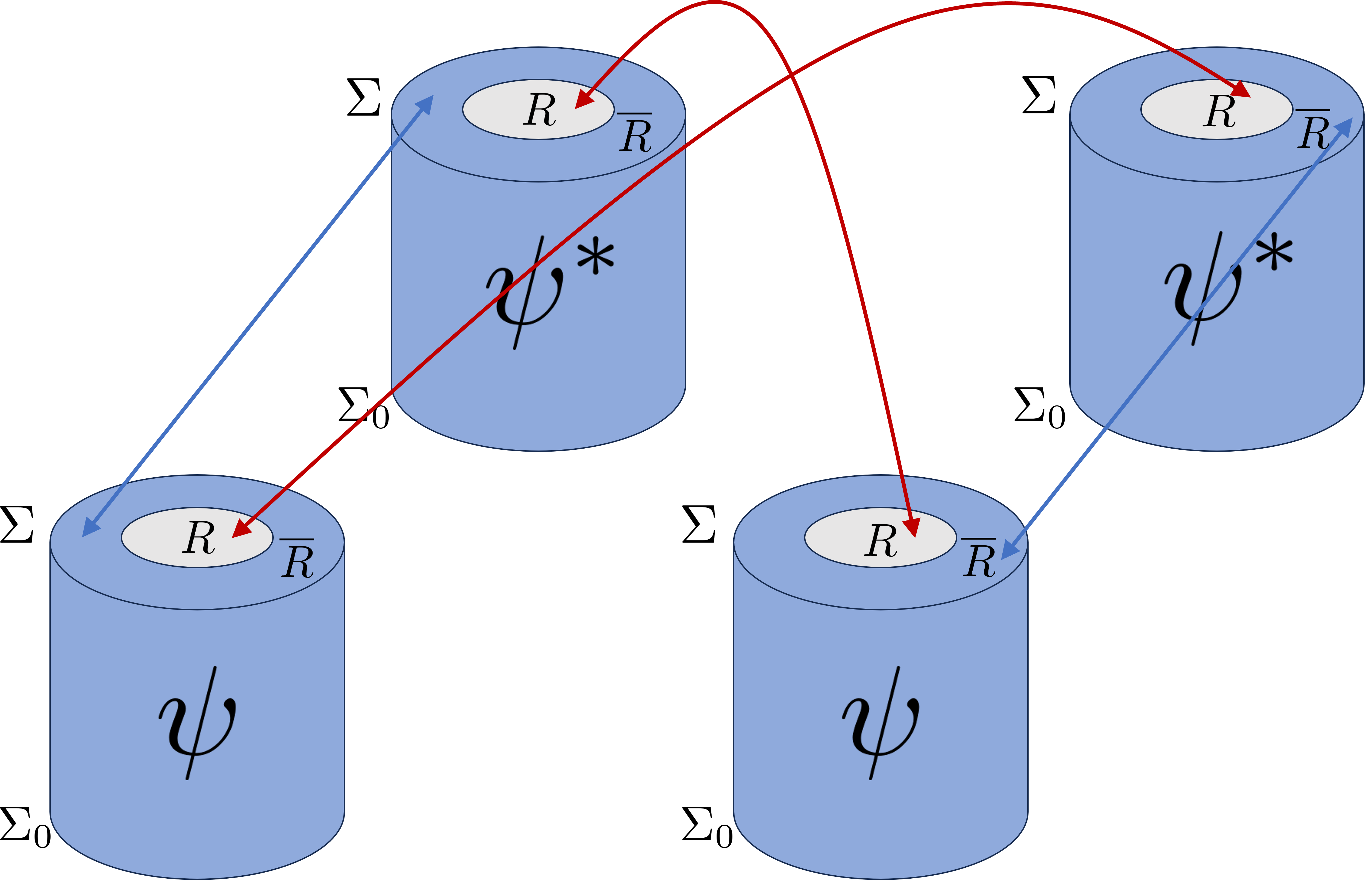}
\caption{The replica partition function $\cz_{\mu,n}$ for $n=2$: The gluing with the blue arrows is due to $\mathrm{Tr}_{\bR,\mu}$ defining the reduced density operator $\bm\rho_{R,\mu}$. The gluing with the red arrows corresponds to the product of $\bm\rho_{R,\mu}$ and the trace $\mathrm{Tr}_{R,\mu}$. The gluing gives a manifold with branch surface $\Fs=R\cap\bR$.}
\label{replica}
\end{figure}

\subsection{Replica trick}\label{Replica trick}

We define $\bm \rho =|\psi\rangle\langle\psi|$ with the diagonal $\bm\rho_{\mu\mu}=|\psi_\mu \rangle\langle\psi_\mu|$. In each $\mu$-sector, the reduced density operator is $\bm\rho_{R,\mu}=\mathrm{Tr}_{\bR,\mu}(\bm\rho_{\mu\mu})$, and the replica partition function $\cz_{\mu,n}=\tr_{R,\mu}(\bm\rho_{R,\mu}^n)$ is given by gluing $n$ pairs of $\psi_\mu,\psi^*_\mu$ according to the replica trick FIG.\ref{replica}. We denote by $\ck_n$ the 2-complex obtained by the replica gluing of $2n$ copies of $\ck$'s as in FIG.\ref{replica}. The replica partition function $\cz_{\mu,n}$ is a stack amplitude based on the root complex $\ck_n$ with the fixed spin profiles $\mu$ at the branch surface $\Fs$, while the full partition function $\cz_{n}$ sums $\cz_{\mu,n}$ over $\mu$. $\cz_{\mu,n}$ can be expressed by using the following ingredients:

\begin{enumerate}

\item We label each of the $n$ copies of $\psi_\mu$ by $(I,+)$, $I=1,\cdots,n$ and label each of the $n$ copies of $\psi^*_\mu$ by $(I,-)$, $I=1,\cdots,n$. The associated integration variables are $g_{ve}^{I,\pm}\in\Slc$ where $(v,e)$ belongs to the root complex $\ck$ and $i=1,\cdots,n$. We denote the product of Haar measure by
\be
\lt[\rmd g\rt]_n=\prod_{(v,e),(I,\pm)}\rmd g_{ve}^{I,\pm}\ .
\ee
The collection $\{g_{ve}^{I,\pm}\}$ contains all the integration variables on $\ck_n$. The partition function $\cz_{\mu,n}$ is an integral with the measure $[\rmd g]_n$.

\item Consider any face $f\in\ck$ that has no intersection with the spatial slice $\Sig$. Here $f$ may be an internal face $f=h$ or a boundary face $f=b_0$ only intersecting with $\Sig_0$. From $f$, the replica trick generates $2n$ faces $f_I^\pm$, $I=1,\cdots,n$, in $\ck_n$. If we still denote a face in $\ck_n$ by $f$, any $f$ that has no intersection with $\Sig$ contributes to the integrand of $\cz_{\mu,n}$
\be
\o_f\lt(A_f;g_{ve}^{I,\pm},\vec{\ff},\l_f\rt),
\ee
$\o_f$ for $f=f_I^+$ and $f=f_I^-$ associate to two opposite orientations along $\partial f$.

\item Given any boundary face $b^+$ in $\psi_\mu$ that intersects with $\Sigma$ (in $R$ or $\bR$) but does not intersect with $\Fs$: $b^+\cap \Fs=\emptyset$, it glues with the corresponding face $b^-$ in $\psi^*$ ($b^\pm$ are the same face with opposite orientation), and the gluing is given by the inner product\footnote{The trace in $\xi_k^{(b)}$ can be understood as a trace over $\ch_k$ of SU(2) irrep due to $P_k$. The operator $\overrightarrow{\prod}_{v\in\partial b}P_{k_i}g_{ve}^{-1}g_{ve'}P_{k_i}$ is an operator on $\ch_k$. Then for any operators $A\in\cb(\ch_{k'})$ and $B\in\cb(\ch_k)$, $\int \rmd H\,\tr_{k'}(A H)\tr_k( H^{-1} B)=A_{i j}B_{m n}\int \rmd H\, D^{k'}(H)_{ji}D^k(H)^*_{m n}=A_{i j}B_{m n}d_k^{-1}\delta^{kk'}\delta_{jm}\delta_{in}=d_k^{-1}\delta^{kk'}\tr_k(AB)$. The traces are ``glued'' by canceling $h$ and $H^{-1}$. FIG.\ref{bpbm} may be seen as the graphical illustration of this relation.}
\be
\sum_{k_i,k_i'}\int \rmd H_{\fl(i)}\, \zeta^{(b^+)}_{k_i'}\,\zeta^{(b^-)*}_{k_i}=\sum_{k_i}\l_b^2\tau^{(b^+\cup b^-)}_{k_i},\qquad \tau^{(b^+\cup b^-)}_{k_i}=d_{k_i}\tr_{(k_i,\rho_i)}\lt[\overrightarrow{\prod_{v\in\partial (b^+\cup b^-)}}P_{k_i}g_{ve}^{-1}g_{ve'}P_{k_i}\rt]
\ee
for each $i=1,\cdots,p_b$. See FIG.\ref{bpbm} for an illustration. The inner product is non-trivial only when pairing the terms with the same $p_b$ from $\psi_\mu$ and $\psi_\mu^*$. Two copies of $b$ makes $b^+\cup b^-$ as an internal face $h$ in $\ck_n$. $\tau^{(b^+\cup b^-)}_{k_i}$ is understood in the same way as $\tau^{(h)}_{k_i}$ in \eqref{zetakh}. The corresponding contribution from $h=b^+\cup b^-$ is given by 
\be
\o_h\lt(A_h;g_{ve}^{I,\pm},\l_h\rt),\qquad A_h=A_{b},\quad \l_h=\l_b^{2},
\ee
where $b=b^\pm\in\ck$ and $h=b^+\cup b^-\in\ck_n$.

\item Given any boundary face $b\in\ck$ intersecting with $\Fs$: $b\cap \Fs\neq \emptyset$, the link $\fl_*=b\cap \Sig$ is divided by $\Fs$ into half-links $\fl_*^R$ and $\fl_*^{\bR}$, and we write 
\be
\zeta_{k_i}^{(b)}(H^R_{\fl_*(i)},H^{\bR}_{\fl_*(i)})=\l _b d_{k_i}\tr_{({k_i},\rho_i)}\lt[\lt(\overrightarrow{\prod_{v\in\partial b}}P_{k_i}g_{ve}^{-1}g_{ve'}P_{k_i}\rt) H^R_{\fl_*(i)}H^{\bR}_{\fl_*(i)}\rt]
\ee
where $H^R_{\fl_*(i)},H^{\bR}_{\fl_*(i)}$ are the SU(2) holonomies along the half-links $\fl_*^R,\fl_*^{\bR}$ respectively. As a part of the data $\mu$, $p_b$ and $k_1,\cdots,k_{p_b}$ are fixed for $\psi_\mu$. We label this boundary face in each of the $2n$ copies of $\psi,\psi^*$ by $b^{\pm}_I$, $I=1,\cdots,n$. Gluing these $2n$ faces gives\footnote{$\int \rmd H_1\,\tr_{k'}(\bar{H}_1 A H_1)\tr_k(H_1^{-1}B \overline{H}^{-1}_2 )=d_k^{-1}\delta^{kk'}\tr_k(\overline{H}_1AB\overline{H}_2^{-1} )$ again ``glues'' the traces by canceling $H_1$ and $H_1^{-1}$, then as the next step, $\int \rmd \overline{H}_2 \tr_k(\overline{H}_1AB\overline{H}_2^{-1} )\tr_{k'}(\overline{H}_2 CD\overline{H}_3^{-1} )= d_k^{-1}\delta^{kk'} \tr_k(\overline{H}_1ABCD\overline{H}_3^{-1} )$. Iteratively using these relations gives \eqref{bbbbbbbb} as illustrated by FIG.\ref{replica_face}. Note that there are only $2n-1$ non-trivial integrals in \eqref{bbbbbbbb}: $\overline{H}_1$ cancels after integrating $H_1\cdots H_n$ and $\overline{H}_2\cdots\overline{H}_n$. For example, $\int \rmd h\,\tr_{k'}(\overline{H}_1A H)\tr_k(H^{-1}B \overline{H}^{-1}_1 )=d_k^{-1}\delta^{kk'}\tr_k(AB )$.}
\be
&&\sum_{\{k_I^\pm\}}\int \prod_{I=1}^n\rmd H_I\rmd \overline{H}_I\, \zeta_{k_1^+}^{(b_1^+)}(H_1,\overline{H}_1)\,\zeta_{k_1^-}^{(b_1^-)}(H_1,\overline{H}_2)^*\zeta_{k_2^+}^{(b_2^+)}(H_2,\overline{H}_2)\,\zeta_{k_2^-}^{(b_2^-)}(H_2,\overline{H}_3)^*\cdots \zeta_{k_n^+}^{(b_n^+)}(H_n,\overline{H}_n)\,\zeta_{k_n^-}^{(b_n^-)}(H_n,\overline{H}_1)^*\nonumber\\
&=&\sum_k \l_b^{2n}\tau_k^{( b_1^+\cup b_1^- \cup b_2^+\cup b_2^-\cdots b_n^+\cup b_n^-)}\equiv \sum_k\l_h^{2n}\tau_k^{(h)},\label{bbbbbbbb}
\ee
where $H$ and $\overline{H}$ stands for the half-link holonomies $H^R$ and $H^{\bR}$. The glued face $h=b_1^+\cup b_1^- \cup b_2^+\cup b_2^-\cdots b_n^+\cup b_n^-$ is an internal face in the root complex $\ck_n$, and $\t^{( b_1^+\cup b_1^- \cup b_2^+\cup b_2^-\cdots b_n^+\cup b_n^-)}_{k}$ is understood in the same way as $\t^{(h)}_{k}$ in \eqref{zetakh} for the internal face $h=b_1^+\cup b_1^- \cup b_2^+\cup b_2^-\cdots b_n^+\cup b_n^-$ in $\ck_n$. We have identified $\l_h=\l_b$ where $h=b_1^+\cup b_1^- \cdots b_n^+\cup b_n^-\in\ck_n$ and $b=b_I^\pm\in\ck$. See FIG.\ref{replica_face} for an illustration of \eqref{bbbbbbbb}.

\end{enumerate}

\begin{figure}[t]
\centering
\includegraphics[width=0.4\textwidth]{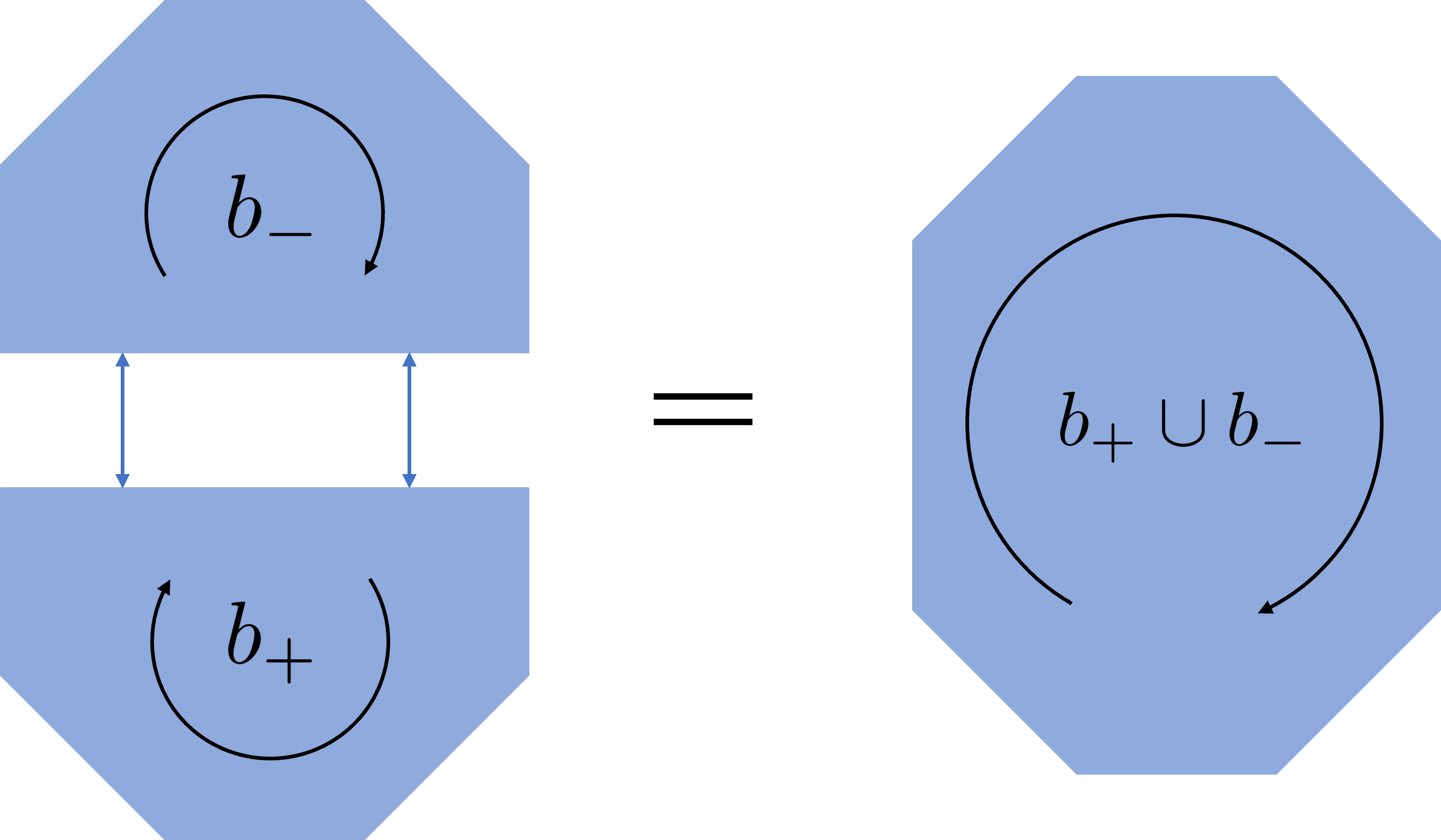}
\caption{The gluing inside $R$ or $\bR$.}
\label{bpbm}
\end{figure}

\begin{figure}[t]
\centering
\includegraphics[width=1\textwidth]{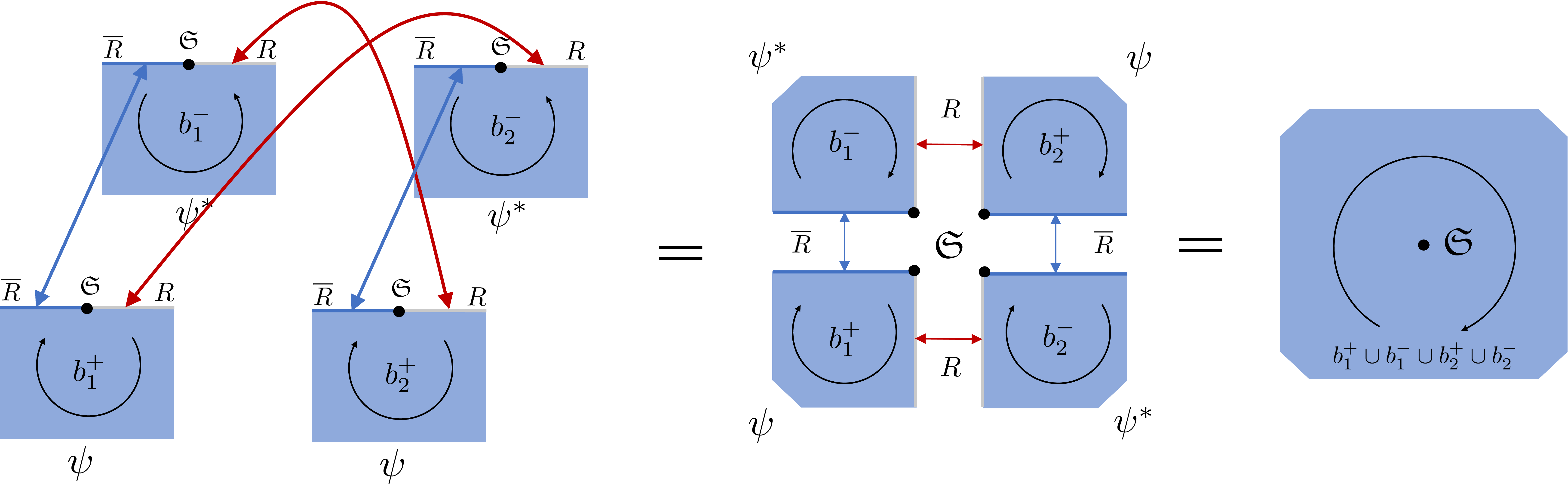}
\caption{Gluing faces at $\Fs$ for $n=2$. The intersection $b_I^{\pm}\cap\Fs$ is the branch point (the black point labelled by $\Fs$) of the resulting face. The holonomies $H_I$ and $\overline{H}_I$ corresponds respectively to the gray and blue edges, where the gluing happens.}
\label{replica_face}
\end{figure}

The replica partition function $\cz_{\mu,n}$ is the stack amplitude based on the root complex $\ck_n$ with fixed $\mu=(\calp,\cj)$ on $\Fs$:
\be
\cz_{\mu,n}=\int\lt[\rmd g\rt]_n\prod_{f\cap\Fs= \emptyset}\o_f\prod_{h\cap \Fs\neq \emptyset}\l_h^{2n p_h}\prod_{i=1}^{p_h}\t^{(h)}_{k_{i}}\Theta\left(A_{h}-\alpha_{p_h,\vec{k}}\right).
\ee
where $f$ and $h$ in the integrand labels the faces in $\ck_n$. The product over the faces $f$ with $f\cap\Fs= \emptyset$ contains the ingredients in the above items 2 and 3, while the faces $h$ with $h\cap\Fs\neq \emptyset$ corresponds to the item 4. The partition function $\cz_{\mu,n}$ is real and positive by construction.

The full replica partition function $\cz_n$ is given by $\cz_n=\sum_\mu \kappa_\mu^{1-n}\cz_{\mu,n}$:
\be
\cz_{n}&=&\int\lt[\rmd g\rt]_n \prod_{f\cap\Fs= \emptyset}\o_f\prod_{h\cap\Fs\neq \emptyset}\o_h^{(n)}\label{Zleqna000}\\
\o_h^{(n)}&=&\sum_{p_h \in \Z_+} \l_h^{2n p_h}\prod_{i=1}^{p_h}\sum_{k_i\in\Z_+}d_{k_i}^{1-n}\tau^{(h)}_{k_{i}}\Theta\left(A_{h}-\alpha_{p_h,\vec{k}}\right).
\ee

\subsection{Computing the replica partition function}

The same approximation as in Section \ref{Internal faces} can be applied to $\o_h^{(n)}$: For some sufficiently large $\re(s)>0$, the following sum converges for any $n\geq 1$
\be
&&\sum_{p_h \in \Z_+} \sum_{k_{1},\cdots, k_{p_h}\in\Z_+} \lt|\l_h^{2n p_h}\prod_{i=1}^{p_h}d_{k_i}^{1-n}\tau^{(h)}_{k_{i}}\rt|e^{-\re(s)\alpha_{p_h,\vec{k}}}
\leq \sum_{p_h \in \Z_+} \sum_{k_{1},\cdots, k_{p_h}\in\Z_+} \l_h^{2n p_h}\prod_{i=1}^{p_h}d_{k_i}^{3-n}e^{-\re(s)\alpha_{p_h,\vec{k}}}\nonumber\\
&=&\sum_{p_h=1}^\infty \lt(\l_h^{2n}\sum_{k=1}^\infty d_k^{3-n}e^{-\re(s)\sqrt{k(k+2)}} \rt)^{p_h}<\infty.
\ee
Then the sum in $\o^{(n)}_{h}$ can be computed by using \eqref{invLaplace},
\be
\o^{(n)}_{h}&=&\frac{1}{2\pi i}\int\limits_{T-i\infty}^{T+i\infty}\frac{\rmd s_h}{s_h}e^{A_h s_h}\lt[\sum_{p_h=1}^\infty \lt(\l_h^{2n}\sum_{k=1}^\infty d_k^{1-n}\t_k^{(h)}e^{-s_h\sqrt{k(k+2)}} \rt)^{p_h}\rt]\nonumber\\
&=&\frac{1}{2\pi i}\int\limits_{T-i\infty}^{T+i\infty}\frac{\rmd s_h}{s_h}e^{A_h s_h}\frac{\l_h^{2n}\sum_{k=1}^\infty d_k^{1-n}\t_k^{(h)}e^{-s_h\sqrt{k(k+2)}}}{1-\l_h^{2n}\sum_{k=1}^\infty d_k^{1-n}\t_k^{(h)}e^{-s_h\sqrt{k(k+2)}}}\ .\label{ratio590}
\ee 
We denote by $g_h=\{g_{ve}\}_{e\subset h}$ the set of group variables along $\partial h$. The pole $s_h^{(n)}(g_h)$ with the maximal possible $\re(s_h)$ in the $s$-plane is given by
\be
\l_h^{2n}\sum_{k=1}^\infty d_k^{1-n}\t_k^{(h)}(g_h)e^{-s_h^{(n)}(g_h)\sqrt{k(k+2)}}=1.\label{polegh}
\ee
By the bound $|\t_k^{(h)}|\leq d_k^2$, the maximum of $\re(s_h^{(n)}(g_h))$ on the space of $g_h$ is given by $\b_h(n)>0$ satisfying
\be
\l^{2n}_h\sum_{k=1}^\infty d_k^{3-n} e^{-\b_h(n)\sqrt{k(k+2)}}=1.\label{spole2n0}
\ee
The maximum is reached when $g_h$ along the boundaries of $h=b_1^+\cup b_1^- \cup b_2^+\cup b_2^-\cdots b_n^+\cup b_n^-$ satisfy
\be
\text{all}\  g_{ve}^{-1}g_{ve'}\in \Su,\quad e,e'\subset\partial h ,\qquad \overrightarrow{\prod_{v\in\partial h}} g_{ve}^{-1}g_{ve'} =1\ .\label{g0solution110}
\ee 
For large $A_h$, we obtain the approximation similar to \eqref{omegah31}
\be
&&\o_h^{(n)}= e^{A_h s_h^{(n)}(g_h)}\cf^{(n)}_h(g_h) +\Fr_h^{(n)} ,\qquad \cf^{(n)}_h(g_h)=\frac{1}{s_h^{(n)}(g_h)\lag \sqrt{k(k+2)} \rag_{h,g}^{(n)}},\label{omegah80}\\
&&\lag \sqrt{k(k+2)} \rag_{h,g}^{(n)}=\l_n^{2n}\sum_{k=1}^\infty \sqrt{k(k+2)} d^{1-n}_k \t_k^{(h)} (g_h) e^{- s_h^{(n)}(g_h)\sqrt{k(k+2)}},
\ee
where $\Fr_h^{(n)}$ is the subleading contribution from the poles with $\re(s_h)$ restrictly less than $\b_h(n)$.

Insert \eqref{omegah80} and the approximation \eqref{omegah31} for $\o_h$ with $h\cap\Fs = \emptyset$, we obtain 
\be
\cz_n&\simeq& e^{\sum\limits_{h\cap\Fs\neq\emptyset} \beta_h(n) A_h+\sum\limits_{h\cap\Fs=\emptyset} \beta_h A_h} \int [\rmd g]_n\, e^{I_n+I_{\rm int}+I_{\rm glue}} {\prod_{h\cap\Fs\neq\emptyset} \cf^{(n)}_h}{\prod_{h\cap\Fs=\emptyset} \cf_h }\prod_b\o_b.\label{cznapprox}
\ee
In this expression, $h,b$ labels faces in $\ck_n$. In particular, $b$ labels the boundary faces that corresponds to $b_0$ for each copy of $\ck$. The terms in $I_n+I_{\rm int}+I_{\rm glue}$ are based on the classification in Section \ref{Replica trick}: $I_n$ collects the faces in $\ck_n$ with $h\cap\Fs\neq\emptyset$, i.e. $h=b_1^+\cup b_1^- \cdots b_n^+\cup b_n^-$. $I_{\rm glue}$ contains the faces made by the gluing $h= b_+\cup b_-$. $I_{\rm int}$ contains all the internal faces in $2n$ copies of $\ck$.
\be
&& I_n=\sum_{h\cap\Fs\neq\emptyset} A_h \big[s^{(n)}_h(g_h)-\b_h(n)\big],\qquad I_{\rm int}=\sum_{\substack{h\cap\Fs=\emptyset }}A_h \big[s_h(g_h)-\b_h\big],\qquad I_{\rm glue}=\sum_{\substack{h\cap\Fs=\emptyset \\ h= b_+\cup b_-}}A_h \big[s_h(g_h)-\b_h\big]. 
\ee
Based on the classification in Section \ref{Replica trick}, we re-express the first exponent in \eqref{cznapprox} in terms of the faces $h,b$ in $\ck$ and find the second term is proportional to $n$, 
\be
\sum_{h\cap\Fs\neq\emptyset} \beta_h(n) A_h=\sum_{\substack{b\subset\ck \\ b\cap\Fs\neq\emptyset} }\beta_b(n) A_b,\qquad 
\sum_{h\cap\Fs=\emptyset} \beta_h A_h= n\lt( 2\sum_{h\subset\ck} \beta_h A_h+\sum_{\substack{b\subset\ck \\ b\cap\Fs=\emptyset}} \beta_b A_b\rt).
\ee

Recall that there are two asymptotic limits for the spinfoam states $\psi$: removing cut-offs of internal spins: $A_h\to\infty$ and the classical limit $\hbar\to 0$ which relates to $A_b\to\infty$. Here $h,b$ are faces in $\ck$ where $\psi$ is defined. In the following, we take the limit $A_h\to\infty$. On the other hand, instead of the classical limit, we keep $A_b$ large but finite for $b$ intersecting $\Sig$ and compute the scaling behavior of the entanglement entropy for large $A_b$. The initial state $\psi_0$ and $A_{b_0}$ for $b_0$ intersection $\Sig_0$ are arbitrary. In particular, we do not consider the scaling of $A_{b_0}$. We will see that the leading order behavior of the entanglement entropy does not depend on the initial state $\psi_0$.

Following the discussion in Section \ref{Stationary phase approximation}, when we send the cut-off $A_h\to\infty$ uniformly, the integrals over $g_{ve}^{I,\pm}$ for $e$ not connecting to $\partial\ck$ localizes on $2n$ copies of $\cc_{\rm int}$. The asymptotics of $\cz_n$ gives
\be
\cz_n = e^{\sum_{ b\cap\Fs\neq\emptyset} \beta_b(n) A_b+n \Fl}\int [\rmd g_{b}]_n\,f_n,\qquad \Fl =\sum\limits_{b\cap\Fs=\emptyset} \beta_b A_b+2\sum\limits_{h} \beta_h A_h-\mathscr{D}_{\rm int}\log \bA_h
\ee
All $h$ and $b$ in this formula labels the faces in a copy of $\ck$. The integral $\int[\rmd g_{b}]_n=\int\prod_{I=1}^{n}\prod_{\pm}\prod_{(v,e_{b})} \rmd g_{ve_{b}}^{I, \pm}$ is over all group variables associated to the edges $e_b\subset\ck$ connecting to the boundary $\Sig$ where the gluing happens. It is clear that $\Fl$ is $n$-independent. $\bA_h$ is the mean value of $\{A_h\}_{h\subset\ck}$. The integrand $f_n$ does not scale with the cut-offs $A_h$
\be
f_n
&\simeq& \int\limits_{\cc_{\rm int}^{\times 2n}}\rmd^{2n} \mu(u) \, e^{I_n(u)+I_{\rm glue}(u)}\prod_{\substack{h\cap\Fs\neq \emptyset\\h=b_1^+\cup b_1^- \cdots b_n^+\cup b_n^-}}\cf_h^{(n)}(u)\prod_{\substack{h\cap\Fs=\emptyset \\ h= b_+\cup b_-}}\cf_h(u)\nonumber\\
&&\qquad \prod_{I,\pm}\int \prod_{(v,e_{b_0})} \rmd g_{ve_{b_0}}^{I, \pm}\, {\prod_{b_0^\pm}\o_{b_0^\pm}(\vec u_I^{\pm})}\lt[\phi_I^\pm (u_I^{\pm})+O(\bA^{-1})\rt] .
\ee
where $u\equiv \{\vec{u}_I^{\pm}\}_{I=1}^n$ are the coordinates on the critical manifold $\cc_{\rm int}^{\times 2n}$. $\cf_h(u)$ is the function $\cf_h(g_h)$ evaluated on the critical manifold, and similar for other $u$-dependent functions. The edges $e_{b_0}\subset\ck$ connect to the boundary $\Sig_0$ where the initial state $\psi_0$ is defined.

The action $I_n+I_{\rm glue}$ is linear to $A_b\gg1 $. The stationary phase approximation localizes the remaining integral on the solutions to \eqref{g0solution110} for all $h=b_1^+\cup b_1^- \cdots b_n^+\cup b_n^-$ and $h= b_+\cup b_-$. At this point, the integral of $\cz_n$ localizes on the space of solutions to \eqref{g0solution110} for all internal faces $h\subset\ck_n$. We denote this solution space by $\cc^{(n)}_{\rm int}$ \footnote{$\cc^{(n)}_{\rm int}$ relates to the moduli space of SU(2) flat connections on the replicated manifold $\sm_4^{(n)}$ (the manifold obtained by gluing $\sm_4$ as in FIG.\ref{replica}).}. As a result,
\be
\cz_n = e^{\sum_{ b\cap\Fs\neq\emptyset} \beta_b(n) A_b+n \Fl}\bA_b^{-\mathscr{D}_{\rm glue}(n)}\iota_n
\ee
where $\bA_b$ is the mean value of $\{A_b\}_{b\subset \ck\cap\Sig}$. $\mathscr{D}_{\rm glue}(n)\propto n$ equals to half of the dimension of the integral. $\iota_n$ does not scale with both $A_b$ and $A_h$:
\be
\iota_n=\int\limits_{\cc^{(n)}_{\rm int}}\rmd \mu(u)\prod_{I,\pm}\int \prod_{(v,e_{b_0})} \rmd g_{ve_{b_0}}^{I, \pm}\, {\prod_{b_0^\pm}\o_{b_0^\pm}(\vec u_I^{\pm})}\lt[\varphi_n (u)+O(\bA^{-1})\rt].
\ee
We still use $u$ to denote the coordinates on $\cc^{(n)}_{\rm int}$. Note that $\cf_h$ and $\cf_h^{(n)}$ are constant on $\cc^{(n)}_{\rm int}$.






From now on, we assume for simplicity that at all boundary faces $b \subset \ck$ intersecting $\Sig$, the coupling constants $\l_b=\l$ are constant in the definition of $\psi$. This implies $\b_b(n)=\b(n)$ to be an $n$-dependent constant. Then $\log(\cz_n)$ relates to the total area $a$ of $\Fs$:
\be
\log(\cz_n)= \b(n) a+n\Fl-\mathscr{D}_{\rm glue}(n) \log(a)+O(1),\qquad a=\sum_{b\cap \Fs\neq \emptyset} A_b.\label{logZn0}
\ee
The von Neumann entanglement entropy given by the limit of the modular entropy satisfies the area law at the leading order:
\be
S(\bm\rho,R)=\lim_{n\to 1}\widetilde{S}_n=-\lim_{n\to 1}n^{2}\partial_{n}\left[\frac{1}{n}\log(\cz_n)\right]=\b a+O(1).\label{arealawSrhoR0}
\ee
It is remarkable that $S(\bm\rho,R)$ does not depend on the cut-offs $A_h$ of internal spins, since the term in $\log(\cz_n)$ proportional to $n$ does not contribute to $\widetilde{S}_n$. The coefficient $\b$ of the area law is given by
\be
\b=\b(1)-\partial_n\b(1)=\beta(1)+\frac{\lambda^{-2}\log\left(\lambda^{-2}\right)+\sum_{k=1}^{\infty}d_{k}^{2}\log d_{k}e^{-\beta(1)\sqrt{k(k+2)}}}{\sum_{k=1}^{\infty}d_{k}^{2}e^{-\beta(1)\sqrt{k(k+2)}}\sqrt{k(k+2)}}.
\ee
where $\partial_n\b(1)$ is obtained by the derivative of \eqref{spole2n0}. The value of $\b$ depending on $\l$ can be computed numerically: For $\l=1$, $\b\simeq 1.595742$. FIG.\ref{beta0plot} shows the behavior of $\b$ as $\l$ changes. 


\begin{figure}[t]
\centering
\includegraphics[width=0.6\textwidth]{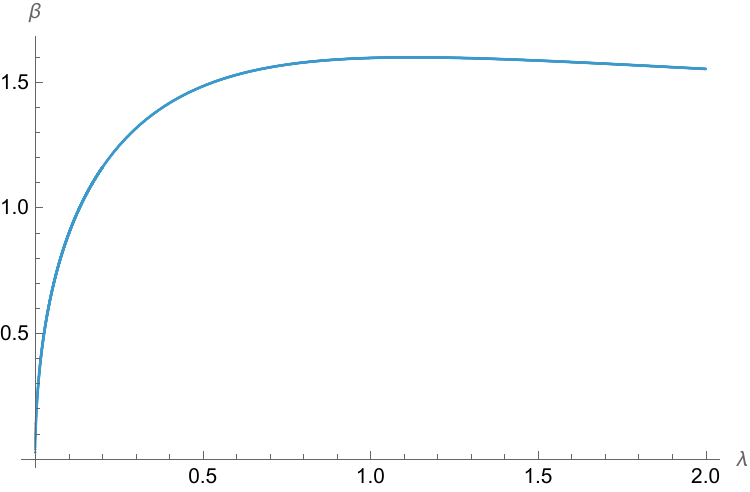}
\caption{This figure plots $\b$ as a function of $\l$.}
\label{beta0plot}
\end{figure}


The entropy formula in Eq. \eqref{arealawSrhoR0} has several interesting features: First, $\b$ does not depend on the choice of root complex $\ck$ in defining spinfoam state $\psi$, so the entropy formula \eqref{arealawSrhoR0} is a triangulation-independent prediction from the spinfoam LQG.

The dimensionless area $a$ relates to the dimensionful area of the surface $\Fs$ by $\Ar(\Fs)=4\pi \g\ell_P^2 a$. By setting the parameter $\b=\pi\g$, we recover the standard Bekenstein-Hawking area law for gravitational entropy:
\be
S(\bm\rho,R)\simeq \frac{\Ar(\Fs)}{4\ell_P^2}.
\ee
The condition $\b=\pi\g$ fixes the coupling constant $\l$ as a function of the Barbero-Immirzi parameter $\g$ for $0<\g\lesssim 1/2$, as shown in FIG. \ref{lambdaPlot}. $\l(\g)$ is small in the regime of small $\g$. Consequently, the sum over graphs in the state $\psi$ becomes a perturbative expansion in the small parameter $\l(\g)$, where the leading order term is given by the root graph.

Motivated by the fact that any spatial slice of a spinfoam stack yields a spin-network stack, we generalize this $\g$-dependence to all coupling constants $\l_f$. This implies that the stack amplitude $\sa_\ck$ is also a perturbative expansion in $\l(\g)$ for small $\g$. The leading order of this expansion is the spinfoam amplitude on the root complex $\ck$.

The existing results on the semiclassical analysis of spinfoam amplitude indicates that the classical gravity should relate to the regime of small $\g$, and these results are only based on the root complex (see e.g.\cite{Han:2023cen,claudio,claudio1,propagator3,propagator2,propagator1}). These results still hold for the full stack amplitude, as contributions from higher-order graphs in the stack are naturally suppressed by powers of the small coupling $\l(\g)$.

\begin{figure}[t]
\centering
\includegraphics[width=0.6\textwidth]{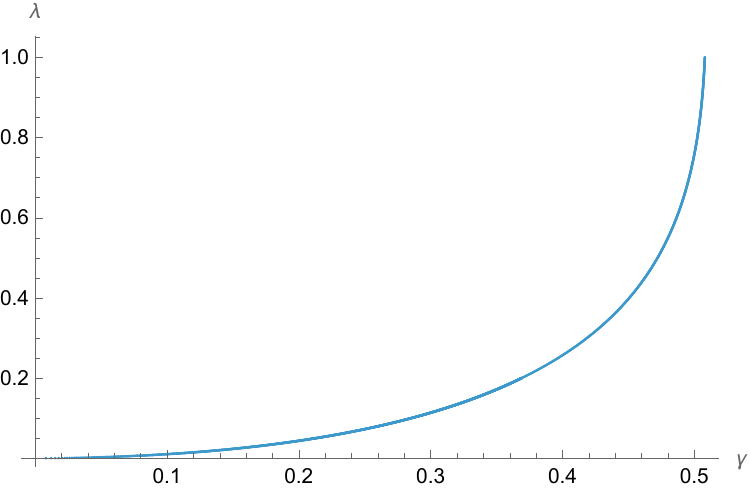}
\caption{This figure plots $\l$ as a function of $\g$ in the small $\g$ regime. The relation between $\l$ and $\g$ is fixed by $S(\bm\rho,R)\simeq {\Ar(\Fs)}/{4\ell_P^2}.$}
\label{lambdaPlot}
\end{figure}

The above derivation does not taking into account the scaling of $A_{b_0}$ for $b_0$ connecting to the initial slice $\Sig_0$. The leading order area law and the logarithmic term in \eqref{arealawSrhoR0} does not depend on the initial state $\psi_0$. In particular, one obtain the same result even if $\psi_0$ is a non-stack spin-network state. Indeed, for each $b_0$, $\partial_{\l_{b_0}}\cz_n|_{\l_{b_0}\to0}$ removes the faces stacked on $b_0$ but does not affect the result \eqref{logZn0} of $\log(\cz_n)$ except for the $O(1)$ term.

An advantage of our spinfoam-based computation is its avoidance of the sign ambiguity present in conventional Lorentzian path integral derivations of gravitational entropy. Traditional methods typically rely on an analytic continuation of the gravitational action $S_{\rm EH}$ (see e.g., \cite{dche,Colin-Ellerin:2020mva,Dittrich:2024awu,Banihashemi:2024weu}). This procedure generates the entropy from an imaginary part of the action, but it also introduces a fundamental sign ambiguity ($\pm$) due to the singular geometry at the entangling surface $\Fs$ in the replicated manifold. To obtain the correct positive sign, one must make an ad-hoc choice of the path integral contour to circumvent the singularity.

Our approach is free of this problem. As a genuinely Lorentzian path integral, the spinfoam formulation requires no analytic continuation and its integration cycle is non-singular. This ensures the resulting entropy is positive definite and free of $\pm$ ambiguity (this does not rely on the sign of $\l$ because only $\l^2$ enters the computation, see e.g. \eqref{ratio590}). This distinction also points to a more direct physical interpretation: the entropy in our framework, given by Eq. \eqref{arealawSrhoR0}, is not an artifact of a complexified action but a direct consequence of summing over distinct microstates $\mu$ on the surface $\Fs$. This state-sum, $\cz_n=\sum_\mu\kappa_\mu^{1-n}\cz_{\mu,n}$, arises because the von Neumann algebra $\cm$ of the region is not a factor and has a center given by the area operators at the punctures on $\Fs$.

Finally, we emphasize that the entropy formula \eqref{arealawSrhoR0} is valid for all bipartitions of the spatial slice $\Sig$, as far as the interface intersects the graph at the links.

\section{Spinfoam state with fix areas}\label{Spinfoam state with fix areas}

The macroscopic areas $A_b$ are the upper bound of the coarse-grained areas given by the spinfoam state $\psi$. Now we consider an alternative spinfoam state $\psi'$ which fixes a set of coarse-grained areas as in \eqref{stackcoarsegr2}. This state approximates the common eigenstate of the set of area operators $\{\mathbf{A}_\fl\}_{\fl\subset\G}$, each of which associates to the surface $\Fs_\fl=R_i\cap R_j$ transverse to the link $\fl=b\cap\Sig$ and the stacked links at $\fl$. Similar to \eqref{stackcoarsegr2}, $\psi'$ is obtained from $\psi$ by the replacement 
\be
\Theta\left(A_{b}-\alpha_{p_b,\vec{k}}\right)\to \chi_{\delta}\left(A_{b}-\alpha_{p_b,\vec{k}}\right)
\ee
for all boundary faces $b\subset\ck$ intersecting $\Sig$: 
\be
\psi'=\int \prod_{(v,e)}\rmd g_{ve}\, \prod_{f = h,b_0}\o_f \prod_b\o_b', \qquad \o_b'=\sum_{p_b=1}^{\infty}\prod_{i=1}^{p_b}\sum_{k_{i}=1}^{\infty}\zeta^{(b)}_{k_{i}}\chi_\delta\left(A_{b}-\alpha_{p_b,\vec{k}}\right) .
\ee
The state $\psi'$ sums the common eigenstates of $\{\mathbf{A}_\fl\}_{\fl\subset\G}$ with eigenvalues $\alpha_{p_b,\vec{k}}$ satisfying
\be
A_{b}-\delta \leq \alpha_{p_b,\vec{k}}\leq A_{b}+\delta.
\ee
Here $A_{b}\gg1$ is a macroscopic area and $\delta\sim O(1)$.

For any biparation $\Sig=R\cup \bR$ with the interface $\Fs$. The spin profile on $\Fs$ is again denoted by $\mu$. The state $\psi'$ can be written as a restricted sum of $\psi_\mu$ defined in \eqref{psimustate00}: $\psi'=\widetilde{\sum}_\mu\psi_\mu$, where the restricted sum $\widetilde{\sum}_\mu$ respects $\chi_\delta(\cdots)$ instead of $\Theta(\cdots)$. Consequently, when we define $\bm\rho'=|\psi'\rangle\langle\psi'|$ and apply the replica trick, the resulting replica partition function is given by $\cz_n'=\widetilde{\sum}_\mu\cz_\mu$. By the relation
\be
\chi_\delta\left(A_{b}-\alpha_{p_b,\vec{k}}\right)=\Theta\left(A_{b}+\delta-\alpha_{p_b,\vec{k}}\right)-\Theta\left(A_{b}-\delta-\alpha_{p_b,\vec{k}}\right),
\ee
and the expression of $\cz_n$ in terms of $\Theta(\cdots) $, the new replica partition function $\cz_n'$ can be obtained by a sequence of summing and subtracting $\cz_n$'s
\be
&&\cz_n'(A_1,A_2,\cdots,A_m)=\sum_{s_1,\cdots,s_m=\pm1}s_1\cdots s_m\cz_n(A_1+s_1\delta,A_2+s_2\delta,\cdots,A_m+s_m\delta).
\ee
The right hand side only modifies the $O(1)$ term in $\log(\cz_n)$. The main contribution to $\log(\cz'_n)$ is the same as $\log(\cz_n)$
\be
\log(\cz'_n)= \b(n) a+n\Fl-\mathscr{D}_{\rm glue}(n) \log(a)+O(1),\qquad a=\sum_{b\cap \Fs\neq \emptyset} A_b.
\ee
The entanglement entropy of $\psi'$ is also given by the same expression as in \eqref{arealawSrhoR0}
\be
S(\bm\rho',R)=-\lim_{n\to 1}n^{2}\partial_{n}\left[\frac{1}{n}\log(\cz_n')\right]=\b a+O(1).
\ee

\section{An analog of black hole entropy}\label{An analog of black hole entropy}

The spinfoam state discussed in the last section fixes a set of areas $A_b$, and the entanglement entropy is computed for aribitrary bipartition. In this section, we consider an alternative spinfoam state $\psi_\Fs$ and a preferred biparatition $\Sig=R\cup\bR$. The state $\psi_\Fs$ fixes (approximately) the total area $a$ of the entire interface $\Fs=R\cap\bR$, instead of fixing the individual areas $A_b$ whose sum gives $a$. This scenario may be seen as an analog of a black hole, where the horizon corresponding to $\Fs$ divides the spatial slice into the interior $\bR$ and exterior $R$, although here we still consider $\Sig,\Sig_0$ to be compact to ignore any boundary effect. The entanglement entropy of $\psi_\Fs$ with respect to the preferred bipartition is expected to relate to the black hole entropy.

The state $\psi_\Fs$ is approximately an eigenstate of the area operator associated to $\Fs$. We restrict the state to be a linear combination of the spin-network states satisfying
\be
a-\delta\leq \sum_{b\cap \Fs\neq \emptyset}\a_{p_b,\vec{k}} \leq a+\delta
\ee
where $a\gg1$ and $\delta\sim O(1)$. The macroscopic area of $\Fs$ and $4\pi \g \ell_P a$. $\psi_\Fs$ is obtained from $\psi$ by replacing all the $\Theta(\cdots)$'s of the boundary faces $b$ satisfying $b\cap\Fs\neq \emptyset$ with $\chi_\delta(\cdots)$ restricting the area of the entire surface $\Fs$:
\be
\prod_{b\cap \Fs\neq \emptyset}\Theta\lt(A_b-\a_{p_b,\vec{k}}\rt)\ \to\  \chi_\delta\lt(a-\sum_{b\cap \Fs\neq \emptyset}\a_{p_b,\vec{k}}\rt),
\ee
while all other $\Theta(\cdots)$'s are left unchanged. The resulting state $\psi_\Fs$, has the following expression
\be
\psi_\Fs&=&\int \prod_{(v,e)}\rmd g_{ve}\, \O_\Fs \prod_{f \cap \Fs= \emptyset}\o_f , \label{psistate}\\
\O_\Fs&=&\sum_{\{p_b \in \Z_+\}_{b\cap \Fs\neq \emptyset}} \sum_{\{k_{1},\cdots, k_{p_b}\in\Z_+\}_{b\cap \Fs\neq \emptyset}} \prod_{b\cap \Fs\neq \emptyset}\prod_{i=1}^{p_b}\zeta^{(b)}_{k_{i}}\chi_\delta\left(a-\sum_{b\cap \Fs\neq \emptyset}\alpha_{p_b,\vec{k}}\right),\label{OmegaFs}
\ee
The sum in \eqref{OmegaFs} is the same as $\sum_\mu$, which is now constrained by a single $\chi_\delta(\cdots)$ for the total area of $\Fs$. In \eqref{psistate} and \eqref{OmegaFs}, $v$, $e$, $f$ and $b$ are the quantities in the root complex $\ck$.

We define $\bm\rho_\Fs=|\psi_\Fs\rangle\langle \psi_\Fs|$ and perform the replica trick according to the bipartition. Following the same procedure as before, we obtain the replica partition function
\be
\cz_{\Fs,n}=\cz_{\leq,n}(a+\delta)-\cz_{\leq,n}(a-\delta)\ .\label{substractFs}
\ee
where
\be
\cz_{\leq,n}(a)&=&\int\lt[\rmd g\rt]_n\, \O^{(n)}_{\Fs,\leq}(a)\prod_{f\cap\Fs= \emptyset}\o_f\ ,\label{Zleqna00}\\
\O^{(n)}_{\Fs,\leq}(a)&=&\sum_{\{p_h \in \Z_+\}_{h\cap \Fs\neq \emptyset}} \sum_{\{k_{1},\cdots, k_{p_h}\in\Z_+\}_{h\cap \Fs\neq \emptyset}} \,\prod_{h\cap \Fs\neq \emptyset}\lt[\l_h^{2n p_h}\prod_{i=1}^{p_h}d_{k_i}^{1-n}\tau^{(h)}_{k_{i}}\rt]\Theta\left(a-\sum_{h\cap \Fs\neq \emptyset}\alpha_{p_h,\vec{k}}\right).
\ee
In the expressions of $\cz_{\leq,n}(a)$ and $\O^{(n)}_{\Fs,\leq}(a)$, $f,h$ are the faces in the root complex $\ck_n$ after the replica gluing.

The sum in $\O^{(n)}_{\Fs,\leq}(a)$ can be computed by using \eqref{invLaplace},
\be
\O^{(n)}_{\Fs,\leq}(a)&=&\frac{1}{2\pi i}\int\limits_{T-i\infty}^{T+i\infty}\frac{\rmd s}{s}e^{as}\prod_{h\cap \Fs\neq\emptyset}\lt[\sum_{p_h=1}^\infty \lt(\l^{2n}\sum_{k=1}^\infty d_k^{1-n}\t_k^{(h)}e^{-s\sqrt{k(k+2)}} \rt)^{p_h}\rt]\nonumber\\
&=&\frac{1}{2\pi i}\int\limits_{T-i\infty}^{T+i\infty}\frac{\rmd s}{s}e^{as}\prod_{h\cap \Fs\neq\emptyset}\frac{\l^{2n}\sum_{k=1}^\infty d_k^{1-n}\t_k^{(h)}e^{-s\sqrt{k(k+2)}}}{1-\l^{2n}\sum_{k=1}^\infty d_k^{1-n}\t_k^{(h)}e^{-s\sqrt{k(k+2)}}}\ .\label{ratio59}
\ee 
In this formula, the product of the ratios over different faces $h$ is under a single $s$-integral, in contrast to the situation at \eqref{ratio590}, where each $h$ associates to an $s_h$-integral. We denote by $q$ the number of links intersecting $\Fs$ in $\Sig$. $q$ equals to the number of ratios in the integrand.

In the integrand of \eqref{ratio59}, each $h$ contributes a pole $s_h^{(n)}(g_h)$ in the $s$-plane, which satisfying the same equation as \eqref{polegh} with constant $\l_h=\l$. However, in contrast to the situation at \eqref{polegh}, now all the poles from different $h$ are on the same $s$-plane. All these poles reach the maximal $\re(s_h^{(n)}(g_h))$ when \eqref{g0solution110} are satisfied for all $h$, and they become coincide: $s_h^{(n)}(g_h)=\b(n)$ for all $h\cap\Fs\neq\emptyset$, in other words, Eq.\eqref{g0solution110} satisfied for all $h$ gives an order-$q$ pole to the integral.

For large $a$, we compute $\O^{(n)}_{\Fs,\leq}(a)$ by only taking into account the poles making dominant contributions: 
\be
&&\O^{(n)}_{\Fs,\leq}(a)= \sum_{h\cap \Fs\neq\emptyset} e^{a\,s_h^{(n)}(g_h)}\cg^{(n)}_h(\vec g)+\Fr_\Fs \\
&& \cg^{(n)}_h(\vec g)=\frac{1}{s_h^{(n)}(g_h)\lag \sqrt{k(k+2)}\rag^{(n)}_{h,g}}\,\prod_{\substack{h'\neq h \\ h'\cap \Fs\neq\emptyset}}\frac{F^{(n)}_{h'}\lt(g_{h'},s_h^{(n)}(g_h)\rt)}{1-F^{(n)}_{h'}\lt(g_{h'},s_h^{(n)}(g_h)\rt)}
\ee
where $\vec g\equiv \{g_{h}\}_{h\cap \Fs\neq\emptyset}$. We have used the short-hand notation
\be
F^{(n)}_{h'}\lt(g_{h'},s^{(n)}_h(g_h)\rt)&=&\l^{2n}\sum_{k=1}^\infty d_k^{1-n}\t_k^{(h')}(g_{h'})e^{-s^{(n)}_h(g_h)\sqrt{k(k+2)}}
\ee
This formula is valid for generic $g$, so $s(g_h)\neq s(g_{h'})$ for $h\neq h'$. On $\cc_{\rm int}^{(n)}$ where \eqref{g0solution110} are satisfied for all $h\cap\Fs\neq\emptyset$, $\O^{(n)}_{\Fs,\leq}(a)$ gives the fastest possible growth as $a\to\infty$ (on the space of $g$):
\be
&&\O^{(n)}_{\Fs,\leq}(a) =  a^{q-1}e^{\b(n)a}\cj^{(n)}_\Fs,\qquad \cj^{(n)}_\Fs= \frac{(q-1)!}{\b(n)\lt[\lag \sqrt{k(k+2)}\rag_{\b(n)}\rt]^q}+O\lt(a^{-1}\rt) ,\\
&&\qquad \qquad \lag \sqrt{k(k+2)}\rag_{\b(n)}=\l^{2n}\sum_{k=1}^\infty d_k^{3-n}\sqrt{k(k+2)} e^{- \b(n)\sqrt{k(k+2)}}.
\ee
Then $\cz_{\leq,n}(a)$ is expected to behave as
\be
\cz_{\leq,n}(a)=a^{q-1}e^{\b(n)a} \int  [\rmd g]_n\cdots
\ee
where the integral suppresses in power-law in $a$ as $a\to\infty$.

Indeed, this behavior is demonstrated by the following argument: We define the coordinates $t$ transverse to $\cc_{\rm int}^{(n)}$ in the space of $\{g_{ve}^{I,\pm}\}$, so the integral of $[\rmd g]_n$ becomes the integrals of $u$ and $t$ in a neighborhood, where $u$ are coordinates on $\cc_{\rm int}^{(n)}$, and $\cc_{\rm int}^{(n)}$ is at $t=0$. We expand the following functions in power series in ${t}$ \footnote{The expansion may also be obtain by expanding $F_h^{(n)}(g_h,s)$ in \eqref{ratio59}: $F_h^{(n)}(g_h(t),s)=1+\delta s \partial_s F_h^{(n)}(g_h(0),\beta(n))+\lt[c^{(n)}_{h}\rt]_{\a\b}t^\a t^\b$, where $\delta s=s-\b(n)$ and $\partial_s F_h^{(n)}(g_h(0),\beta(n))=-\langle\sqrt{k(k+2)}\rangle_{\b(n)}$, then performing the contour integral of $\delta s$.}
\be
a\lt[s_h^{(n)}-\b(n)\rt]+\sum_{h'\cap \Fs=\emptyset} A_{h'} \lt[s_{h'}-\b_{h'}\rt]&=&a\lt(\frac{1}{2} \lt[H^{(n)}_h\rt]_{\a\b}t^\a t^\b+O(t^3)\rt),\label{expSh}\\
1-F^{(n)}_{h'}\lt(g_{h'},s_h^{(n)}(g_h)\rt)&=&\lt[c^{(n)}_{h'}\rt]_{\a\b}t^\a t^\b-\lt[c^{(n)}_{h}\rt]_{\a\b}t^\a t^\b+O(t^3),\label{1-Fh}
\ee
where $H_h^{(n)}$ is the Hessian matrix. $c^{(n)}_{h}$ is given by
\be
\lt[c^{(n)}_{h}\rt]_{\a\b}=\frac{1}{2}\frac{\partial}{\partial t^\a}\frac{\partial}{\partial t^\b} F_{h}^{(n)}\lt(g_{h}(t),\beta(n)\rt)\Big|_{t=0},
\ee
We apply the scaling of coordinates $t^\a\to a^{-1/2} t^\a$. The scaling removes the overall $a$ in \eqref{expSh} and scales \eqref{1-Fh} with $a^{-1}$. The measure $\rmd^{\Delta(n)} t$ is scaled with $a^{-\Delta(n)/2}$. As a result, $\cz_{\leq,n}(a)$ has the following behavior
\be
\cz_{\leq,n}(a)&=& a^{q-1}a^{-\Delta(n)/2}e^{\b(n)a}e^{\sum_{h\cap \Fs= \emptyset}A_h \b_h}\int\rmd\mu(u)\int \rmd^{\Delta(n)} t\\
&&\sum_{h\cap \Fs\neq \emptyset}\frac{e^{\frac{1}{2} \lt[H^{(n)}_h(u)\rt]_{\a\b}t^\a t^\b+O(a^{-1/2})}}{\prod\limits_{\substack{h'\neq h\\ h'\cap\Fs\neq\emptyset}}\lt(\lt[c^{(n)}_{h'}\rt]_{\a\b}t^\a t^\b-\lt[c^{(n)}_{h}\rt]_{\a\b}t^\a t^\b+O\lt(a^{-1/2}\rt)\rt)}\lt[\Psi_n(u)+O\lt(a^{-1/2}\rt)\rt].
\ee
where $\Psi_n(u)$ is proportional to the Jacobian of measure at $\cc_{\rm int}^{(n)}$. The integrand at $t=0$ is finite and proportional to $\cj^{(n)}_{\Fs}$. As $a\to\infty$, the $t$-integral on a compact neighborhood at $t=0$ is finite and does not scale with $a$. As a result, we obtain
\be
\log(\cz_{\leq,n}(a))=\b(n)a+\lt[q-1-\Delta(n)/2\rt]\log(a)+n\Fl'+O(1)
\ee
where $n\Fl'=\sum_{h\cap \Fs= \emptyset}A_h \b_h$ is proportional to $n$. By \eqref{substractFs}, $\log(\cz_{\Fs,n}(a))$ is given by the same expression as $\log(\cz_{\leq,n}(a))$, where the correction is only at $O(1)$.

The von Neumann entanglement entropy is given by 
\be
S(\bm\rho_\Fs,R)=-\lim_{n\to 1}n^{2}\partial_{n}\left[\frac{1}{n}\log(\cz_{\Fs,n})\right]=\b a+(q-1-\Delta/2)\log(a)+O(1).\label{arealawSrhoR1}
\ee
where $\Delta=\Delta(1)-\partial_n\Delta(1)$. The leading order area law reproduces the black hole entropy $\Ar(\Fs)/(4\ell_P^2)$ when the coupling constant $\l=\l(\g)$ is given by FIG.\ref{lambdaPlot}, by the same discussion as below \eqref{arealawSrhoR0}.
In addition, the logarithmic correction in \eqref{arealawSrhoR1} contains a positive contribution $q$ relating to the discretization of $\Fs$, due to the order-$q$ pole. This dependence of the discreteness is due to the definition of the state $\psi_\Fs$ is based on the fixed boundary root graph $\G$.

\begin{acknowledgements}

The author receives supports from the National Science Foundation through grants PHY-2207763 and PHY-2512890 and from a sponsorship of renewed research stay in Erlangen from the Humboldt Foundation. 

\end{acknowledgements}

\appendix

\section{Solutions of $\sum_{k=1}^\infty (\pm 1)^k d_k^{2}e^{- s_\pm \sqrt{k(k+2)}} = \lambda^{-1}$}\label{Solutions of sum=lambda equations}

\noindent
\emph{Uniqueness of the real solution $s_+=\beta$}:

Let us first analyze the equation $\sum_{k=1}^\infty d_k^{2}e^{- s_+ \sqrt{k(k+2)}} = \lambda^{-1}$. We define the short-hand notation: $E_k = \sqrt{k(k+2)}$ and $g(s) = \sum_{k=1}^\infty d_k^{2}e^{- s E_k}$. The equation for $s_+$ is $g(s_+) = \lambda^{-1}$. Let us examine the properties of $g(s)$ for $s > 0$: $g(s)$ is a strictly decreasing function as $s$ grows, because the derivative $g'(s) = - \sum_{k=1}^\infty d_k^{2} E_k e^{- s E_k}<0$. A strictly decreasing function can only take on any specific value once. There must be a unique positive real value $s_+\equiv \b$ that satisfies $g(\beta) = \lambda^{-1}$.\\

\noindent 
\emph{The bound of other complex solutions $s_+$}:

Now we show that any other complex solution $s_+ = x+iy$ (where $x,y\in\R$ and $y \neq 0$) must satisfy $x < \beta$ strictly for the equation $g(s_+) = \lambda^{-1}$. The equation implies that $\im (g(s_+))=0$ and for the real part
\be
\sum_{k=1}^\infty d_k^{2}e^{-x E_k} \cos(yE_k) = \lambda^{-1}.
\ee
By $\cos(\theta) \le 1$, we obtain
\be
g(\beta)=\lambda^{-1} \leq \sum_{k=1}^\infty d_k^{2}e^{-x E_k} = g(x).
\ee
Since $g(s)$ is a strictly decreasing function, this inequality implies $x \le \beta$.

The equality $x = \beta$ can only happen if $\cos(yE_k) = 1$ for all $k$. This would require $yE_k$ to be a multiple of $2\pi$ for every $k$. This is impossible for any $y \neq 0$ because the ratios of different $E_k$ values (e.g., $E_1/E_2 = \sqrt{3/8}$) are generally irrational. Thus, for any non-real solution, the inequality must be strict, which means $\text{Re}(s_+) = x < \beta$.\\

\noindent
\emph{The equation $\sum_{k=1}^\infty (-1)^k d_k^2 e^{-s_- E_k} = \lambda^{-1}$ and the bound of the solutions $s_-$}:

Let $s_- = x+iy$ be a solution. The real part of the equation is 
\be
\sum_{k=1}^\infty d_k^{2}e^{-x E_k} (-1)^k\cos(yE_k) = \lambda^{-1},
\ee
which again implies $g(\beta)\leq g(x)$ and thus $ x \leq \beta$. The equality $x = \beta$ can only happen if $(-1)^k\cos(yE_k) = 1$ for all $k$, i.e. $yE_k=\pi k$ mod $2\pi$. This is again impossible because the ratios of different $E_k$ values are generally irrational. Therefore, the inequality must again be strictly $\text{Re}(s_-) = x < \beta$.

\bibliography{muxin.bib}

\end{document}